 \newcommand{\figref}{Figure \ref}
 \newtheorem{lemma}{Lemma}[section]
\newtheorem{theorem}[lemma]{Theorem}
\newtheorem{thm}[lemma]{Theorem}
\newtheorem{prop}[lemma]{Proposition}
\newtheorem{cor}[lemma]{Corollary}
\newtheorem{proposition}[lemma]{Proposition}
\newtheorem*{thm*}{Theorem}
\newtheorem*{def*}{Definition}
\theoremstyle{definition}
\newtheorem{remark}[lemma]{Remark}
\newtheorem{rem}[lemma]{Remark}
\newtheorem{example}[lemma]{Example}
\newcommand\R{\mathbb{R}}
\newcommand\RR{\mathbb{R}}
\newcommand\sym{\operatorname{sym}}
\begin{document}  
    
\title{Quantum Entanglement, Symmetric Nonnegative Quadratic Polynomials and Moment Problems}

\author[G.~Blekherman]{Grigoriy Blekherman}
\address{Grigoriy Blekherman: School of Mathematics, Georgia Tech, 686 Cherry
  Street, Atlanta, Georgia, 30332, United States of America;
  {\normalfont \texttt{greg@math.gatech.edu}}}

\author[B.H.~Madhusudhana]{Bharath Hebbe Madhusudhana}
\address{Bharath Hebbe Madhusudhana: Fakult\"{a}t f\"{u}r Physik, Ludwig-Maximilians-Universit\"{a}t M\"{u}nchen,
Schellingstr. 4,
80799 M\"{u}nchen, Germany;
 {\normalfont \texttt{Bharath.Hebbe@physik.uni-muenchen.de}}}
\date{\today}

 \begin{abstract}
Quantum states are represented by  positive semidefinite Hermitian operators with unit trace, known as density matrices. An important subset of quantum states is that of separable states, the complement of which is the subset of \textit{entangled} states. 
We show that the problem of deciding whether a quantum state is entangled can be seen as a moment problem in real analysis. Only a small number of such moments are accessible experimentally, and so in practice the question of quantum entanglement of a many-body system (e.g, a system consisting of several atoms) can be reduced to a truncated moment problem.
By considering quantum entanglement of $n$ identical atoms we arrive at the truncated moment problem defined for symmetric measures over a product of $n$ copies of unit balls in $\RR^d$. We work with moments up to degree $2$ only, since these are most readily available experimentally. We derive necessary and sufficient conditions for belonging to the moment cone, which can be expressed via a linear matrix inequality of size at most $2d+2$, which is independent of $n$. The linear matrix inequalities can be converted into a set of explicit semialgebraic inequalities giving necessary and sufficient conditions for membership in the moment cone, and show that the two conditions approach each other in the limit of large $n$. The inequalities are derived via considering the dual cone of nonnegative polynomials, and its sum-of-squares relaxation. We show that  the sum-of-squares relaxation of the dual cone is asymptotically exact, and using symmetry reduction techniques \cites{MR2067190,BR}, it can be written as a small linear matrix inequality of size at most $2d+2$, which is independent of $n$. For the cone of symmetric nonnegative polynomials with the relevant support we also prove an analogue of the half-degree principle for globally nonnegative symmetric polynomials  \cites{MR1996126,MR2864859}.  \end{abstract}

\maketitle
\section{Introduction}

We consider a problem that lies in the intersection of real analysis, real algebraic geometry and quantum entanglement.  We begin with a brief introduction to quantum entanglement, explaining how it is related to real analysis and algebraic geometry. We then define and motivate the problem considered in this paper. 

In quantum mechanics, the state of a physical system is represented by a \textit{density matrix}, which is, by definition, a positive semi-definite, Hermitian operator with trace one. As a convention, we drop the trace condition so that the space of density matrices is a convex cone. Since any positive semidefinite matrix can be rescaled to have trace one, this does not affect the results of the paper. Let $T_n$ be the convex cone of all positive semi-definite Hermitian operators acting on $\mathbb{C}^n$. By our convention, an element of $T_n$ can be rescaled to an $n \times n$ density matrix. Physical systems that correspond to $n=2$ are known as \textit{qubits}. 

A natural way to construct larger density matrices is by using tensor products.  In physics this corresponds to combining several subsystems into a large system. As an example, we will consider a system with two subsystems represented by density operators in $T_m$ and $T_n$ respectively.  The tensor product gives a map $T_m \times T_n \rightarrow T_{mn} $, where $\rho_1\in T_m$ and $\rho_2\in T_n$ are mapped to $\rho_1\otimes \rho_2 \in T_{mn}$, for $m, n \in \mathbb{N}$.  A convex subset $\Omega_{nm}$ of $T_{nm}$ is generated by the conical hull of the image of this map:
\begin{equation*}
\Omega_{mn} = \operatorname{conical.hull}\{\rho_1 \otimes \rho_2: \rho_1 \in T_m \ \ \rho_2 \in T_n \}.
\end{equation*}
An element $\rho\in \Omega_{mn}$ can be obtained from a measure $\mu$ over $T_{m}\times T_n$ via $\rho = \int_{T_m \times T_n}  \rho_1\otimes\rho_2 \, d\mu $. The density matrices in $T_{m}$ are states of an $m$-dimensional system; the density matrices in $T_{n}$ are states of an $n$-dimensional system, and the density matrices in $T_{mn}$ are states of the composite of the two systems. The subset $\Omega_{mn}\subset T_{mn}$ consists of \textit{separable} states. That is, a density matrix in $\Omega_{mn}$ represents a separable state and a density matrix in $T_{mn}\setminus \Omega_{mn}$ represents an \textit{inseparable state}, also known as an \textit{entangled state} of the two subsystems. While the notion of quantum \textit{non-locality} \cites{ PhysRevA.40.4277, Peres1997} is different from inseparability, the latter is used synonymously with entanglement (see for instance, the  recent review \cites{RevModPhys.81.865}). Quantum entanglement is not only fundamental to quantum physics, it also has applications in quantum communications, quantum metrology and quantum computation \cite{RevModPhys.81.865}.  

One of the basic problems in quantum entanglement is the characterization of the cone $\Omega_{mn}$. In the special case of rank-1 density matrices, also known as pure states, this problem has a complete solution \cite{GISIN1991201}. The $mn$ elements of the vector in the range of a rank-1 matrix $\rho \in T_{mn}$ can be written as a $m\times n$ matrix in a natural way. $\rho \in \Omega_{mn}$ iff this matrix also has rank-1. The general problem of deciding whether an element $\rho\in T_{mn}$ is in $\Omega_{mn}$ has been shown to be NP-hard in the complexity parameter $mn$ \cites{2003quantph3055G}. However, several special instances of this problem can be solved exactly. When $m=2, n=2$ and when $m=2, n=3$, the cone $\Omega_{mn}$ has been completely characterized \cites{PhysRevLett.77.1413, HORODECKI19961}, using positive partial transpose criterion. The general case of larger dimensions, or greater number of subsystems, is open, although there are a number of partial results \cites{Recent_Development_1, Recent_Development_2, Recent_Development_3, Recent_Development_4, Recent_Development_5, PhysRevA.89.062110}.  

Our approach is to convert the study of quantum entanglement into a truncated moment problem. The relation between quantum entanglement and the truncated moment problem has been recognised rather recently \cite{PhysRevA.96.032312, doi:10.1137/15M1018514, SDP_entanglement_1} and therefore, the full potential of the state of the art in the truncated moment problem has not been utilized to address the entanglement problem.  The elements of $\rho \in \Omega_{mn}$ are quadratic moments of a measure over $T_{m}\times T_n$, where each quadratic monomial has one variable from $T_m$ and one variable from $T_n$. Therefore, characterizing the cone $\Omega_{mn}$ can be viewed as a truncated moment problem. In fact, this equivalence between deciding whether a quantum state is entangled and a truncated moment problem generalizes to quantum systems with arbitrary number of subsystems. Although quantum entanglement has been studied extensively over the past few decades, this formulation, which allows us to bring in the tools of real algebraic geometry, has not been sufficiently explored. In this work, we consider a class of truncated moment problems that arise from the study of quantum entanglement and show that the resulting criteria on entanglement are stronger than the existing ones.  

The most commonly studied physical system is the so-called ``two-level system" or a ``qubit", described by a density matrix in $T_2$. Theoretically, this is the simplest non-trivial example of quantum mechanical systems, and experimentally, there have been several platforms to realize such systems including trapped atoms, photons, and circuit QED systems. Therefore, we consider a truncated moment problem that arises from the study of entanglement in a system consisting of $n$ subsystems, each described by a density matrix in $T_2$. The space of all density matrices of the composite system is $T_{2^n}$. The subspace of separable states is 
\begin{equation*}
\Omega_{2^n} = \text{conical.hull}\{\rho_1 \otimes \rho_2\otimes \cdots \otimes \rho_n: \rho_i \in T_2 \}.
\end{equation*}
From the above discussion we see that $\Omega_{2^n}$ is a moment cone of measures defined over $T_2\times \cdots \times T_2$, where we take degree $n$ moments, and each moment monomial has one variable from the coordinates of each $T_2$. We observe that $T_2$ has a simple description: any self-adjoint $2\times 2$ complex matrix can be written as 
\begin{equation*}
\rho = \left(\begin{array}{cc}
w+z & x-iy \\
x+iy & w-z
\end{array}\right),
\end{equation*}
for $x, y, z, w \in \mathbb{R}$ and $w\geq 0$. The positive semidefiniteness condition is given by $w^2 -x^2-y^2-z^2 \geq 0$. Due to multilinearity of tensors, 
we may restrict ourselves to measures defined on the compact section $\tilde{T}_2$ of $T_2$ with the hyperplane $w=1$.  We see from above that $\tilde{T}_2$ is $D^3$, the closed unit ball in $\mathbb{R}^3$ also known as the Bloch sphere. Since we dehomogenized by setting $w=1$ we now consider moments up to degree $n$ on $\tilde{T}^n_2$, where each moment monomial has \textit{at most} $1$ variable from the coordinates of each $\tilde{T}_2$.
Therefore, $\Omega_{2^n}$ is a moment cone of such moments for measures defined over the space $K_n^3 =  D^3 \times \cdots \times D^3$.

There are two characteristics of a real physical situation that simplify this problem. First,  in a laboratory, not every element of a density matrix $\rho \in T_{2^n}$ can be retrieved. Therefore, the decision of whether $\rho$ can be in $\Omega_{2^n}$ has to be made based on a subset of the elements of $\rho$. In other words, the problem requires characterization of a projection of the cone of moments over $K_n^3$, usually consisting of very few moments. We do not consider the characterization of the cone $\Omega_{2^n}$ here; instead, we consider the characterization of its projection. That is, corresponding to a linear space $L$, we consider the decision problem of deciding whether a point in $\Pi_L(T_{2^n})$ is in $\Pi_L(\Omega_{2^n})$. The linear space $L$ is defined by taking all the relevant quadratic moments, since these are most readily measured in the lab \cite{vladan_2015,Hoang9475}. And secondly, the atoms used in such experiments are called \textit{bosons}, meaning, they possess an exchange symmetry. That is, the density matrix $\rho$ is invariant under the action of the symmetric group $S_n$, which permutes the atoms, and so is any measure $\mu$ that generates the density matrices in $\Omega_{2^n}$. Therefore, the problem is to characterize the moment cone of \textit{symmetric} measures defined over $K_n^3$. 

We now consider a generalized version of the above problem in physics. In particular we use $D^d$ instead of $D^3$, although the structure is not physically relevant for any value of $d$ other than $3$.

\subsection{Problem Statement and Results}

Let $D^d\subset  \R^d$ be the closed unit ball in a $d$ dimensional real vector space. That is, $D^d=\{\mathbf{v}: \ \ \mathbf{v}\in  \R^d, \ ||\mathbf{v}||\leq 1\}$. We define $K^d_n$ as the product of $n$ such unit balls:
\begin{equation*}
K^d_n = D^d \times \cdots \times D^d \subset  \R^{nd}.
\end{equation*}
Points in $K_n^d$ can be represented by a $n$-tuple of $d$-dimensional vectors $(\mathbf{v}_1, \mathbf{v}_2, \cdots, \mathbf{v}_n )$, where, $\mathbf{v}_i = (v_{i,1}, v_{i,2} \cdots v_{i,d} )\in D^d$. The symmetric group $S_n$ (i.e., the group of all permutations of the set $\{1, 2, \cdots, n\}$) acts on $K_n^d$ by permuting the vectors $\mathbf{v}_i$. For $\sigma \in S_n$, its action on $K_n^d$ is given by $\sigma\circ (\mathbf{v}_1, \mathbf{v}_2, \cdots, \mathbf{v}_n )=(\mathbf{v}_{\sigma(1)}, \mathbf{v}_{\sigma(2)}, \cdots,  \mathbf{v}_{\sigma(n)} ) $. We refer to a measure $\mu$ defined on $K_n^d$ as a \textit{symmetric measure} if it is invariant under the action of $S_n$. That is, if $A\subset K_n^d$, 
\begin{equation*}
\mu(A)= \mu(\sigma\circ A), \,\, \forall\ \sigma \in S_n.
\end{equation*} 
In this work, we consider the truncated $K$-moment problem for symmetric measures over $K_n^d$. Moments of symmetric measures are also invariant under coordinate permutations.

Let $V_{n,d}$ the vector space of real square-free polynomials in $d\cdot n$ variables $\mathbf{x_1},\cdots, \mathbf{x_n}$, $\mathbf{x_i}=(\xi_{i,1},\cdots, \xi_{i,d})$ which is spanned by the following symmetric polynomials: $$1, \,\,\,\, s_{\alpha}=\sum_{i=1}^n \xi_{i,\alpha}, \,\,\, 1\leq \alpha\leq d \hspace{.5cm}\text{and} \hspace{.5cm} s_{\alpha\alpha}=\sum_{i \neq j} \xi_{i,\alpha}\xi_{j,\alpha}, \,\, 1\leq \alpha\leq d.$$

A polynomial $Q$ in $V_{n,d}$ can be uniquely written as 
$$Q(\mathbf{x})=A_0+\sum_{\alpha=1}^d A_\alpha s_\alpha+\sum_{\alpha=1}^d A_{\alpha\alpha}s_{\alpha\alpha}.$$
The dimension of $V_{n,d}$ is therefore $2d+1$. This choice of basis polynomials $s_\alpha$ and $s_{\alpha\alpha}$ is dictated by the underlying physics. The moments are symmetric because the particles are bosons, possessing exchange symmetry, they are square-free since tensor product of different vector spaces only include multilinear (or square-free) monomials, and we only look at quadratic moments because the corresponding moments can be measured in a laboratory \cite{vladan_2015, Hoang9475}.

We sometimes refer to polynomials in $V_{n,d}$ by the $(2d+1)$-tuple of coefficients $(A_0,A_{\alpha},A_{\alpha\alpha})$.

We use $m_{0}, m_\alpha$ and $m_{\alpha\alpha}$ to denote the corresponding moments:
\begin{equation*}
m_0 = \int_{K_n^d}1 \, d\mu, \,\,\,\, m_{\alpha}= \int_{K^d_n}s_{\alpha}\,d\mu, \,\,\,\,m_{\alpha\alpha}= \int_{K^d_n}s_{\alpha\alpha}\, d\mu.
\end{equation*}
The moment sequence $\beta_{\mu}=(m_0,m_{\alpha},m_{\alpha\alpha})$ lies in $\R^{2d+1}$. We define $C_{n,d}\subset \R^{2d+1}$ to be the set of all moment sequence coming from measures on $K_n^d$. Since the product of the unit balls $K_n^d$ is compact, it follows that $C_{n,d}$ is a closed convex cone (cf. \cite[Exercise 4.17]{BPT}).

The dual cone of $C_{n,d}$ is the cone $P_{n,d}$ of symmetric polynomials in $V_{n,d}$ nonnegative on $K_n^d$: \cite[Chapter 4]{BPT}.
\begin{equation*}
P_{n,d}= \left\{Q \in V_{n,d}: \ \  Q(\mathbf{x}) \geq 0 \ \text{for all}\  \mathbf{x} \in K_n^d  \right\}.
\end{equation*}
We consider the problem of characterizing the cones $C_{n,d}$ and $P_{n,d}$. Our main result are four explicit Linear Matrix Inequalities (LMI) of size $2d+2$, which give necessary and sufficient conditions for belonging to $C_{n,d}$ and $P_{n,d}$ (Theorems \ref{thm:sos} and  \ref{thm:approx} for $P_{n,d}$, and Theorem \ref{thm:maindual} for $C_{n,d}$). The linear matrix inequalities for $C_{n,d}$ and $P_{n,d}$ are dual to each other.  These necessary and sufficient criteria can be seen to approach the same limit as the number $n$ of unit balls approaches infinity (see Remark \ref{rem:limit}). For the cone $C_{n,d}$ we were able to convert the linear matrix inequality into an explicit set of $2^d+1$ semialgebraic inequalities describing necessary and sufficient conditions for membership in $C_{n,d}$. 

\begin{thm*}[Theorems \ref{thm:maindual} and \ref{thm:maindual2}]
A non zero vector $(z_0, z_1,\cdots, z_d, z_{11}, z_{22},\cdots, z_{dd})$ lies in $C_{n,d}$ if $z_0>0$ and
\begin{equation*}
    \sum_{\alpha=1}^d\max \left\{\frac{z_0z_{\alpha\alpha}}{n}, \frac{z_{\alpha}^2}{n}\right\} \leq  z_0^2\left(\frac{n-1}{n}\right)^2+ \sum_{\alpha=1}^d \frac{(n-1)z_0z_{\alpha\alpha}}{n^2}
\end{equation*}
Moreover, if the inequalities below are violated:
\begin{equation*}
  z_0>0  \hspace{.3cm} \text{and} \hspace{.3cm}  \sum_{\alpha=1}^d\max \left\{\frac{z_0z_{\alpha\alpha}}{n-1}, \frac{z_{\alpha}^2}{n}\right\} \leq  z_0^2+ \sum_{\alpha=1}^d \frac{z_0z_{\alpha\alpha}}{n}
\end{equation*}
Then the non zero vector $(z_0, z_1,\cdots, z_d, z_{11}, z_{22},\cdots, z_{dd}) \notin C_{n,d}$. Each of the above systems of inequalities may also be expressed by a linear matrix inequality of size $2d+2$.
\end{thm*}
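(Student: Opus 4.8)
The plan is to realize both systems of inequalities as the two sides of a single duality computation: the necessary condition as the dual of the sum-of-squares inner approximation of $P_{n,d}$, and the sufficient condition as the dual of a slightly weaker outer relaxation, the two relaxations differing only by a shift $n\leftrightarrow n-1$ in the symmetry-reduced data, which renders the asymptotic coincidence transparent. Throughout I work on the slice $z_0=1$, where $C_{n,d}$ is the convex hull of the moment image $\{(s_\alpha(\mathbf{v}),s_{\alpha\alpha}(\mathbf{v}))_\alpha:\mathbf{v}\in K_n^d\}$, and where membership is dual to nonnegativity of $Q=A_0+\sum_\alpha A_\alpha s_\alpha+\sum_\alpha A_{\alpha\alpha}s_{\alpha\alpha}$ on $K_n^d$.

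First I would make the sum-of-squares certificate explicit. Writing $g_i=1-\|\mathbf{x}_i\|^2$ and using one common multiplier $\lambda\ge 0$, the identity $s_{\alpha\alpha}=s_\alpha^2-\sum_i\xi_{i,\alpha}^2$ turns $Q-\lambda\sum_i g_i$ into $(A_0-\lambda n)+\sum_\alpha\bigl[A_\alpha s_\alpha+A_{\alpha\alpha}s_\alpha^2+(\lambda-A_{\alpha\alpha})\sum_i\xi_{i,\alpha}^2\bigr]$, a quadratic form that is block diagonal across the $d$ coordinate groups. The $S_n$-action within each group splits its Gram matrix into trivial and standard isotypic components: the standard component contributes the scalar condition $\lambda\ge A_{\alpha\alpha}$ (with multiplicity $n-1$), while the trivial component, spanned by the constant and $\tfrac1{\sqrt n}\sum_i\xi_{i,\alpha}$, contributes a $2\times 2$ block with diagonal entries given by the shared constant and by $\lambda+(n-1)A_{\alpha\alpha}$, and off-diagonal $\tfrac12\sqrt n\,A_\alpha$. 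Collecting the $d$ such blocks, the $d$ scalar conditions, and the shared constant $A_0-\lambda n$ into one arrow-shaped matrix gives the symmetry-reduced LMI of size $2d+2$.

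Next I would dualize this LMI. The dual variables are precisely the moments $(z_0,z_\alpha,z_{\alpha\alpha})$, and the Schur complement of each dual $2\times 2$ block against $z_0$ produces the square $z_\alpha^2$, with the $\sqrt n$ scalings producing the factor $1/n$; the competition, within a coordinate, between the scalar constraint $\lambda\ge A_{\alpha\alpha}$ and the $2\times 2$ block is what produces the term $\max\{z_\alpha^2/n,\;z_{\alpha\alpha}/(n-1)\}$, and eliminating the shared multiplier $\lambda$ yields the aggregate right-hand side $z_0^2+\sum_\alpha z_0 z_{\alpha\alpha}/n$. This is the necessary inequality; because a single multiplier cannot certify all nonnegative $Q$ once several balls are active, it describes only an outer approximation of $C_{n,d}$, so its violation forces $z\notin C_{n,d}$. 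Expanding the $d$ maxima over the $2^d$ choices of active branch (together with $z_0>0$) gives the advertised $2^d+1$ explicit inequalities, and repackaging each condition $z_\alpha^2\le z_0 w_\alpha$ as a $2\times 2$ PSD block with linear slack constraints on the $w_\alpha$ recovers a single LMI of size $2d+2$.

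For the sufficient condition I would run the identical machinery on the outer relaxation of $P_{n,d}$ (equivalently, construct measures directly). By the half-degree-type principle for this support, nonnegativity of $Q$ on $K_n^d$ is governed by configurations in which the atoms $\mathbf{v}_i$ take at most two distinct values, and the extreme measures realizing the boundary of the slice are correspondingly mixtures of a \emph{coherent} configuration (all atoms aligned, giving $s_{\alpha\alpha}=\tfrac{n-1}{n}s_\alpha^2$, the source of the $\tfrac{n-1}{n}$ factors) with \emph{incoherent} ones. Carrying the same Schur-complement computation through the shifted data $n\mapsto n-1$ yields the sufficient inequality, and its coincidence with the necessary one as $n\to\infty$ is immediate from the shift. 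The routine part is the Schur-complement bookkeeping; the genuine obstacle is controlling the loss in the multi-ball S-procedure --- that is, proving the two-value configurations suffice and quantifying exactly how much the single common multiplier $\lambda$ fails to certify nonnegativity --- since this is what fixes the precise $n$-dependence separating the two systems and guarantees that they pinch together in the limit.
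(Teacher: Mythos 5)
Your first two steps are sound and essentially reproduce the paper's own route: the symmetry-reduced arrow LMI you derive for $\Sigma_{n,d}$ (one common multiplier $\lambda$ after symmetrization, trivial-isotypic $2\times 2$ blocks with off-diagonal $\tfrac{\sqrt n}{2}A_\alpha$ and diagonal $\lambda+(n-1)A_{\alpha\alpha}$, standard-isotypic scalar conditions $\lambda\ge A_{\alpha\alpha}$) agrees with Theorem \ref{thm:sos} and Remark \ref{rem:psd}, and your dualization-plus-Schur-complement elimination of the $x_{\alpha\alpha}$ is exactly how the paper passes from Theorem \ref{thm:maindual} to Theorem \ref{thm:maindual2}. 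Since $\Sigma_{n,d}\subseteq P_{n,d}$ gives $C_{n,d}\subseteq \Sigma_{n,d}^*$ for free, this yields the \emph{necessary} conditions (the second display of the statement) in full.

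The \emph{sufficient} conditions, however, are not proved. They require the containment $P_{n,d}\subseteq \Sigma'_{n,d}$ --- that is, that replacing $A_0$ by $\tfrac{n}{n-1}A_0$ turns every polynomial nonnegative on $K_n^d$ into a sum of squares --- and you explicitly defer exactly this point as ``the genuine obstacle.'' That containment is the mathematical heart of the theorem: without it nothing forces the first system of inequalities to imply membership in $C_{n,d}$. Moreover, the route you sketch for filling the gap is off-target. The half-degree principle (Theorem \ref{thm:halfdeg}) bounds minimizing configurations by $2d$ distinct vectors, not two, and it is never invoked in the paper's proof of the moment-cone theorems; likewise the heuristic about extreme measures being mixtures of coherent and incoherent configurations is not a proof. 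What the paper actually does (Theorem \ref{thm:approx}) is: (i) prove an approximation lemma (Lemma \ref{lemma:approx}) constructing, for any admissible $(X,Y)$, a configuration $\mathbf{x}'\in K_n^d$ --- built from three distinct vectors, the third being forced by integrality of the count $k$ --- that matches $X$ exactly, kills $Y'_1,\dots,Y'_{d-1}$, and brings $(Y'_d)^2$ within $1/n$ of the extreme value $1-\|X\|^2/n$; (ii) reduce to the worst coefficient $A_{dd}=\max_\alpha A_{\alpha\alpha}>0$ and conclude $P_Q(X,Y)+A_{dd}\ge Q(\mathbf{x}')\ge 0$; (iii) invoke the one-dimensional polytope description (Proposition \ref{prop:d1}) to bound $A_{dd}\le A_0/(n-1)$, which is precisely where the $\tfrac{n}{n-1}$ dilation, and hence the $(n-1)/n$ factors in the first display, come from. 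Some quantitative argument of this kind is indispensable; your proposal correctly identifies where it is needed but does not supply it.
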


The necessary conditions in the physically relevant case of $d=3$ appeared in \cite{PhysRevLett.99.250405} in the language of spin moments (see Section \ref{sec:last} for more details). For other related results from physics, see \cite{PhysRevLett.107.180502}, \cite{1367-2630-19-1-013027} and \cite{PhysRevLett.86.4431}. As we explain below the necessary conditions are obtained from the sum-of-squares approximation to the cone $P_{n,d}$. The sufficient conditions are new, and they allow us to estimate the tightness of approximation for large $n$, which was previously unknown.

We now give a more detailed description of our results. First we show that nonnegativity on $K^d_{n}$ of a polynomial $Q \in V_{n,d}$, can be established by testing its values on much smaller subset of $K^d_{n}$. It is easy to see that $Q\in V_{n,d}$ is nonnegative on $K_n^d$ if and only if $Q$ is nonnegative on the product of unit spheres $(S^{d-1})^n$. Furthermore we show that $Q(\mathbf{x})\geq 0$ for all $\mathbf{x}=(\mathbf{x_1},\cdots,\mathbf{x_n})\in (S^{d-1})^n$ if and only if $Q(\mathbf{x})\geq 0$ for all $\mathbf{x}$ with at most $2d$ of $\mathbf{x_i}$ distinct.  
\begin{thm*}[Theorem \ref{thm:halfdeg}] A polynomial $Q \in V_{n,d}$ is nonnegative on $S^{d-1}\times \cdots \times S^{d-1}$ if and only if $Q(\mathbf{x}_1, \cdots, \mathbf{x}_n)$ is nonnegative for all sets of $n$ points $\mathbf{x}_1, \cdots, \mathbf{x}_n$ on $S^{d-1}$ with at most $2d$ of them distinct.
\end{thm*}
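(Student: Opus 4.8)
The \emph{only if} direction is immediate, so the content is the \emph{if} direction, which I would prove in contrapositive form: if $Q$ is negative somewhere on $(S^{d-1})^n$, then it is negative at some configuration with at most $2d$ distinct points. Equivalently, I would show that the global minimum of $Q$ over the compact set $(S^{d-1})^n$ is attained at a configuration $(\mathbf{x}_1,\ldots,\mathbf{x}_n)$ in which at most $2d$ of the $\mathbf{x}_i$ are distinct, which contradicts nonnegativity on all such configurations.

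The plan is to analyze a global minimizer via Lagrange multipliers. Since each $\mathbf{x}_i$ ranges over the smooth boundaryless manifold $S^{d-1}$, at a minimizer there are scalars $\mu_i$ with $\nabla_{\mathbf{x}_i}Q = 2\mu_i\mathbf{x}_i$. Using $Q = A_0 + \sum_\alpha A_\alpha s_\alpha + \sum_\alpha A_{\alpha\alpha}s_{\alpha\alpha}$ together with $\partial s_\alpha/\partial \xi_{i,\alpha} = 1$ and $\partial s_{\alpha\alpha}/\partial \xi_{i,\alpha} = 2(s_\alpha - \xi_{i,\alpha})$, the stationarity condition becomes, for every $i$ and $\alpha$,
\[
A_\alpha + 2A_{\alpha\alpha}s_\alpha = 2(A_{\alpha\alpha}+\mu_i)\,\xi_{i,\alpha}.
\]
The crucial feature is that the left-hand side $B_\alpha := A_\alpha + 2A_{\alpha\alpha}s_\alpha$ is a constant independent of $i$, since it depends only on the aggregate moment $s_\alpha$ of the minimizer. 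Hence each coordinate, and therefore the whole point $\mathbf{x}_i$, is determined by the single scalar $\mu_i$ via $\xi_{i,\alpha} = B_\alpha/\big(2(A_{\alpha\alpha}+\mu_i)\big)$, so the number of distinct points among the $\mathbf{x}_i$ is at most the number of distinct values of $\mu_i$.

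To bound the number of admissible $\mu_i$ I would impose the sphere constraint $\sum_\alpha \xi_{i,\alpha}^2 = 1$, which reads
\[
\sum_{\alpha}\frac{B_\alpha^2}{4(A_{\alpha\alpha}+\mu_i)^2} = 1.
\]
After clearing denominators this is a polynomial equation in $\mu_i$ of degree $2\,|\{\alpha : B_\alpha \neq 0\}|$ whose leading term does not cancel, hence with at most $2d$ real roots; this already gives the bound when all $B_\alpha$ are nonzero. I expect the main obstacle to be the degenerate coordinates where $B_\alpha = 0$, for which stationarity forces either $\xi_{i,\alpha} = 0$ or $\mu_i = -A_{\alpha\alpha}$, so that a point can fail to be determined by $\mu_i$ exactly when $\mu_i = -A_{\alpha\alpha}$ for some $\alpha$ with $B_\alpha = 0$. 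Assuming the $A_{\alpha\alpha}$ are pairwise distinct, I would split the points into \emph{regular} ones (whose $\mu_i$ avoids all the poles $-A_{\alpha\alpha}$) and \emph{exceptional} ones; writing $t := |\{\alpha : B_\alpha \neq 0\}|$, the regular points contribute at most $2t$ values by the degree count, while each of the $d-t$ exceptional pole values leaves exactly one coordinate free and the sphere equation then permits at most two points, for a total of at most $2t + 2(d-t) = 2d$.

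Finally, to remove the genericity hypothesis that the $A_{\alpha\alpha}$ are distinct, I would use a limiting argument: perturb the coefficients $A_{\alpha\alpha}$ to make them pairwise distinct, obtaining polynomials $Q_\varepsilon \to Q$ uniformly on $(S^{d-1})^n$; each $Q_\varepsilon$ attains its minimum at a configuration with at most $2d$ distinct points, and since this condition is closed and $(S^{d-1})^n$ is compact, a subsequential limit of these minimizers is a minimizer of $Q$ with at most $2d$ distinct points. The delicate point here, the interchange of minimization with the limit, follows from the uniform convergence $Q_\varepsilon \to Q$. I expect the bookkeeping for the degenerate coordinates, and in particular verifying that the budget $2t + 2(d-t)$ never exceeds $2d$ once coincidences among the $A_{\alpha\alpha}$ have been removed, to be the most delicate part of the argument.
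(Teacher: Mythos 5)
Your proposal is correct, and its skeleton is the same as the paper's: reduce by a genericity argument to the case of pairwise distinct $A_{\alpha\alpha}$, apply Lagrange multipliers at a global minimizer on $(S^{d-1})^n$, observe that $B_\alpha = A_\alpha + 2A_{\alpha\alpha}s_\alpha(\mathbf{x}^*)$ (the paper's $R_\alpha$, with $\lambda_i = 2\mu_i$) is independent of $i$, and count solutions of a univariate polynomial equation. The two delicate steps are, however, executed differently. For the genericity step, the paper argues that polynomials all of whose global minima have more than $2d$ distinct vectors form an open set and perturbs a hypothetical counterexample into it, while you perturb the coefficients, pass to a subsequential limit of minimizers, and use closedness of the locus of configurations with at most $2d$ distinct points; these are equivalent in substance. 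The more interesting difference is the counting. The paper treats the case of all $R_\alpha = 0$ separately, then assumes WLOG $R_1 \neq 0$, parametrizes each minimizer vector by its first coordinate $\xi_{i,1}$, shows $\xi_{i,1}$ is a root of an explicit degree-$2d$ polynomial $P(t)$, and must then handle roots producing two vectors by proving they are double roots of $P(t)$. You parametrize by the multiplier $\mu_i$ itself and split the vectors into regular ones (determined by $\mu_i$, which solves a polynomial of degree $2t$, $t = |\{\alpha : B_\alpha \neq 0\}|$, with non-vanishing leading term) and exceptional ones (at most two for each of the $d-t$ pole values $-A_{\alpha\alpha}$ with $B_\alpha = 0$, the uniqueness of the free coordinate resting on distinctness of the $A_{\alpha\alpha}$); the budget $2t + 2(d-t) = 2d$ then closes the argument with no double-root analysis, and your $t=0$ case is exactly the paper's separate ``all $R_\alpha = 0$'' case, handled uniformly. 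What the paper's parametrization buys is an explicit polynomial in the ambient coordinates; what yours buys is a shorter and less delicate case analysis at the hardest point of the proof.
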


The above theorem is in the spirit of the half-degree principle for globally nonnegative symmetric polynomials \cites{MR1996126,MR2864859}. Our problem is different in two ways. First, we are concerned here nonnegativity of polynomials over a compact subset, $K_n^d$, and second, the symmetry in our problem permutes only $d$-tuples of coordinates. This is a smaller group of transformations, compared to the usual symmetric case. Symmetries of such type have also been considered in \cite{MR3424059}, where such polynomials were called mutlisymmetric. However our bounds are sharper than those of \cite{MR3424059}. 

Secondly, we provide an asymptotically tight characterization of the cones $C_{n,d}$ and $P_{n,d}$ using sums of squares approximations. For $i=1,\cdots,n$ define $p_i=1-\sum_{j=1}^d \xi^2_{j,i}.$  If $Q\in V_{n,d}$ can be written as 
$$Q(\mathbf{x})=\sum_{i=1}^r \ell_i^2(\mathbf{x})+\sum_{i=1}^d \lambda_i(1-p_i(\mathbf{x})),$$
where $\ell_i$ are linear polynomials and $\lambda_i \geq 0$, then $Q$ is clearly nonnegative on $K_n^d$.
Define $\Sigma_{n,d}$ to be the cone of such polynomials, whose nonnegativity can be certified via sums of squares. One can view $\Sigma_{n,d}$ as the first level of the Lasserre (or Sum-of Squares) hierarchy for $K^d_{n}$ \cite{lasserre2001global,MR2500468,MR3395550,parrilo2000structured}. See Section \ref{sec:1.2} for more details. We have the inclusion:
$$\Sigma_{n,d} \subseteq P_{n,d}.$$

As we show in Section \ref{sec:asymp} the containment is already strict when $d=1$. Using this, strict containment for larger $d$ also follows by considering polynomials with $A_{\alpha}$ and $A_{\alpha\alpha}$ equal to $0$ for $\alpha\geq 2$.
For the analogous results on equality between globally nonnegative symmetric polynomials and sums of squares see \cite{MR3464065}.

Using symmetry reduction techniques \cites{MR2067190,BR} we show that $\Sigma_{n,d}$ can be characterized by a \textit{linear matrix inequality} (LMI), of size $2d+2$, which is independent of $n$. Now define $\Sigma'_{n,d}$ as a slightly expanded version of $\Sigma_{n,d}$:
$$ \Sigma'_{n,d} = \{(A_0, A_{\alpha}, A_{\alpha\alpha}): \ A_0+\frac{n-1}{n}\sum_{\alpha=1}^dA_{\alpha}s_\alpha+  \frac{n-1}{n}\sum_{\alpha=1}^dA_{\alpha\alpha}s_{\alpha\alpha} \in \Sigma_{n,d}\}.$$

Our main result on sum of squares approximations is that the reverse inclusion holds for $\Sigma'_{n,d}$ and $P_{n,d}$ :

\begin{thm*}[Theorem \ref{thm:approx}]
We have the following inclusions:
$$\Sigma_{n,d}\subseteq P_{n,d}\subseteq \Sigma'_{n,d}.$$
Moreover, both cones $\Sigma_{n,d}$ and $\Sigma'_{n,d}$ can be described by a linear matrix inequality of size at most $2d+2$.
\end{thm*}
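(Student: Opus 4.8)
The inclusion $\Sigma_{n,d}\subseteq P_{n,d}$ is immediate: if $Q=\sum_i\ell_i^2+\sum_i\lambda_i p_i$ with $\lambda_i\ge 0$, then on $K_n^d$ each $\ell_i^2\ge 0$ and each $p_i\ge 0$, so $Q\ge 0$. Thus the real content is the two LMI representations together with the reverse inclusion $P_{n,d}\subseteq\Sigma'_{n,d}$. The plan is to first reduce the semidefinite description of $\Sigma_{n,d}$ to a fixed-size LMI by exploiting the $S_n$-symmetry, then to read off an explicit semialgebraic reformulation of membership, and finally to match that reformulation against the nonnegativity conditions furnished by the point-reduction principle of Theorem \ref{thm:halfdeg}.

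For the LMI I would use the Gatermann--Parrilo symmetry reduction of sum-of-squares cones \cites{MR2067190,BR}. Since every $Q\in V_{n,d}$ is $S_n$-invariant and the cone of certificates is convex and $S_n$-stable, averaging a certificate over $S_n$ shows we may take $\lambda_i\equiv\lambda$ and an $S_n$-invariant Gram matrix $G$ supported on the degree-$\le 1$ monomials $\{x_0\}\cup\{\xi_{i,\alpha}\}$. The representation of $S_n$ on this span decomposes as $\mathrm{triv}^{\oplus(d+1)}\oplus\mathrm{std}^{\oplus d}$ (the constant and the $d$ averages $\tfrac1n\sum_i\xi_{i,\alpha}$ carry the trivial copies, the $d$ centred blocks carry the standard representation). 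By Schur's lemma the invariant Gram matrix block-diagonalizes into a $(d+1)\times(d+1)$ arrowhead block $T$ on the trivial isotypic component and, since distinct coordinates $\alpha$ do not couple, a diagonal $d\times d$ block on the standard component. Writing a generic $Q$ with coefficients $(B_0,B_\alpha,B_{\alpha\alpha})$ and subtracting $\lambda\sum_i p_i$ (homogenizing $1\mapsto x_0^2$) gives the reduced conditions: $\lambda\ge 0$, $\lambda\ge B_{\alpha\alpha}$ for all $\alpha$, and $T\succeq 0$ with $T_{00}=B_0-\lambda n$, $T_{0\alpha}=\tfrac{\sqrt n}{2}B_\alpha$, $T_{\alpha\alpha}=\lambda+(n-1)B_{\alpha\alpha}$. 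Together these form a single LMI of size $(d+1)+(d+1)=2d+2$ in the variables $(B_\ast,\lambda)$, and $\Sigma_{n,d}$ is its projection; the description of $\Sigma'_{n,d}$ follows by the substitution $B_0\mapsto A_0$, $B_\alpha\mapsto\tfrac{n-1}{n}A_\alpha$, $B_{\alpha\alpha}\mapsto\tfrac{n-1}{n}A_{\alpha\alpha}$.

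To prove $P_{n,d}\subseteq\Sigma'_{n,d}$ I would first make the LMI for $\Sigma'_{n,d}$ fully explicit. Taking a Schur complement of $T$ along its $(0,0)$ entry converts $T\succeq 0$ into the scalar inequality
\begin{equation*}
A_0-\lambda n\ \ge\ \sum_{\alpha=1}^d\frac{\tfrac{(n-1)^2}{4n}\,A_\alpha^2}{\lambda+\tfrac{(n-1)^2}{n}A_{\alpha\alpha}},
\end{equation*}
to be solved for some $\lambda\ge\max\bigl(0,\max_\alpha\tfrac{n-1}{n}A_{\alpha\alpha}\bigr)$. Thus $Q\in\Sigma'_{n,d}$ is an explicit existentially-quantified semialgebraic condition, of exactly the shape appearing in the main theorem. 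On the other side, nonnegativity of $Q$ on $K_n^d$ is equivalent to nonnegativity on $(S^{d-1})^n$, and by Theorem \ref{thm:halfdeg} to nonnegativity on the finitely-parametrized families of configurations with at most $2d$ distinct points; evaluating $s_\alpha,s_{\alpha\alpha}$ on these configurations yields an explicit system of inequalities on $(A_0,A_\alpha,A_{\alpha\alpha})$ describing $P_{n,d}$. The proof is completed by exhibiting, for every $Q\in P_{n,d}$, an admissible $\lambda$ satisfying the displayed inequality, the natural candidate being the value forced by the worst coincident configuration $\mathbf{x}_1=\dots=\mathbf{x}_n=\mathbf t$ (which returns the sphere optimization $A_0+n\sum_\alpha A_\alpha t_\alpha+n(n-1)\sum_\alpha A_{\alpha\alpha}t_\alpha^2\ge 0$).

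The main obstacle is this final matching step: one must show that the extreme test configuration controlling membership in $P_{n,d}$ is, after the $\tfrac{n-1}{n}$ contraction of the linear and quadratic coefficients, exactly the configuration that saturates the Schur-complement bound for $\Sigma'_{n,d}$. Equivalently --- and this may be the cleaner route --- one dualizes, using $(\Sigma'_{n,d})^*=D(\Sigma_{n,d}^*)$ with $D=\mathrm{diag}(1,\tfrac{n-1}{n},\dots,\tfrac{n-1}{n})$, and proves $D(\Sigma_{n,d}^*)\subseteq C_{n,d}$ by producing, for each pseudo-moment vector certified by the dual LMI, an explicit symmetric atomic measure on $K_n^d$ realizing its contracted moments. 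In either formulation the quantitative heart of the argument is tracking the factor $\tfrac{n-1}{n}$ through the rank-one (respectively coincident-point) extreme configurations and verifying that the slack it creates is precisely enough to absorb the gap between genuine nonnegativity and the sum-of-squares certificate; this is also what forces the two bounds to collapse as $n\to\infty$.
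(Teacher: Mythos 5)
Your first inclusion and your LMI derivation are fine, and in fact your symmetry reduction (trivial plus standard isotypic components, arrowhead block $T$ with $T_{00}=B_0-\lambda n$, $T_{0\alpha}=\tfrac{\sqrt n}{2}B_\alpha$, $T_{\alpha\alpha}=\lambda+(n-1)B_{\alpha\alpha}$, side constraints $\lambda\ge 0$, $\lambda\ge B_{\alpha\alpha}$) is essentially identical to the paper's proof of Theorem \ref{thm:sos}, so that part of the statement is covered. The problem is the inclusion $P_{n,d}\subseteq\Sigma'_{n,d}$, which is the actual content of the theorem: you reduce it to exhibiting, for each $Q\in P_{n,d}$, an admissible $\lambda$ satisfying your Schur-complement inequality, you name this the ``main obstacle,'' and you do not overcome it. That is a genuine gap, not a routine verification. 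Moreover, your proposed candidate --- the $\lambda$ forced by the fully coincident configurations $\mathbf{x}_1=\cdots=\mathbf{x}_n=\mathbf{t}$ --- cannot suffice on its own: in the $(X,Y)$ coordinates those configurations all lie in the slice $Y=0$, whereas the whole difficulty of certifying $Q\in\Sigma'_{n,d}$ sits in the $Y$-directions (the ``variance'' part), which is exactly where positive coefficients $A_{\alpha\alpha}$ make the relaxation $P_Q$ go negative. Theorem \ref{thm:halfdeg} is also a red herring here: the paper never uses it in this proof, and matching its union of $\le 2d$-distinct-point configuration families against a single existentially quantified Schur-complement inequality is itself a hard problem you would still have to solve.

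What the paper does instead, and what is missing from your outline, is a quantitative approximation step (Lemma \ref{lemma:approx}): for any $(X,Y)$ with $\|Y\|^2+\tfrac1n\|X\|^2\le 1$, one can construct an actual point $\mathbf{x}'\in K_n^d$ (by splitting the last coordinate between the two antipodal values $\pm\sqrt{1-\sum_{\alpha<d}X_\alpha^2/n}$) whose $X$-values match exactly, whose $Y_\alpha$ vanish for $\alpha<d$, and whose $(Y_d')^2$ equals $1-\|X\|^2/n$ up to error $\tfrac1n$. Combining this with the worst-case bound $-n\sum_\alpha A_{\alpha\alpha}Y_\alpha^2\ge -nA_{dd}\bigl(1-\|X\|^2/n\bigr)$ (where $A_{dd}=\max_\alpha A_{\alpha\alpha}>0$) gives $P_Q(X,Y)\ge Q(\mathbf{x}')-A_{dd}\ge -A_{dd}$, and the elementary bound $A_{dd}\le\tfrac{A_0}{n-1}$ from the one-dimensional polyhedral description (Proposition \ref{prop:d1}) then yields $P_Q+\tfrac{A_0}{n-1}\ge 0$ on the whole domain, i.e. $Q\in\Sigma'_{n,d}$ by Theorem \ref{thm:sos}. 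The factor $\tfrac{n}{n-1}$ you are trying to ``track through extreme configurations'' is precisely the sum of these two effects --- the $\tfrac1n$ discretization error and the bound $A_{dd}\le\tfrac{A_0}{n-1}$ --- and without an argument of this type (or a completed version of your dual atomic-measure construction, which you also leave unexecuted) the theorem is not proved.
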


As the number of unit balls $n$ grows the cones $\Sigma_{n,d}$ and $\Sigma'_{n,d}$ approach each other at the rate $\frac{1}{n}$ and therefore, they give an asymptotically exact characterization of $P_{n,d}$. Therefore sums of squares give an asymptotically tight approximation of nonnegative polynomials. A similar result for fully symmetric globally nonnegative polynomials of degree $4$ was established in \cite{BR}, and for fully symmetric even degree $6$ polynomials in \cite{MR900345}. Using semidefinite duality and Schur complement we derive the LMI and semialgebraic descriptions of the cone $C_{n,d}$ in Theorems \ref{thm:maindual} and \ref{thm:maindual2}.

\subsection{Relation to previous work on Lasserre/Sum of squares hierarchies.}\label{sec:1.2} The cone $\Sigma_{n,d}$ is simply the first level of the Lasserre or sum of squares hierarchy for $K^d_{n}$ \cite{lasserre2001global,MR2500468,MR3395550,parrilo2000structured}. We observe that the size of the linear matrix inequalities produced by the standard hierarchy grows with the number $n$ of unit balls. However, after we apply the symmetry reduction technique, we find an explicit LMI whose size is independent of $n$, and depends only on.the dimension $d$ of the unit ball. In principle, one can use higher levels of the hierarchy to obtain a tighter inner approximation of $P_{n,d}$. Unfortunately,  symmetry reduction gets quite complicated if we go above degree $2$. However, as we show, the first level of the hierarchy unexpectedly provides an asymptotically tight approximation. We note that good behavior of low-degree sum of squares relaxation is a subject of interest in theoretical computer science in relation to the Unique Games Conjecture \cite{o2008optimal}.

We also point out that a small LMI description does not necessarily lead to a simple set of inequalities describing a given set. For instance, some cones of sums of squares are known to have small LMI descriptions, but the semilagebraic inequalities describing them have very large degree \cite{MR2999301}. Therefore, the description of Theorem \ref{thm:maindual2}
is quite fortuitous and not guaranteed from general theory.

The rest of the paper is structured as follows. In the next Section, we present the exact results, i.e., an analogue of the half-degree principle. In Section \ref{sec:asymp}, we provide the asymptotic approximation results. In Section \ref{sec: criteria}, we dualize polynomial results to get asymptotically tight criteria on moments. We conclude in Section \ref{sec:last} with an application of our results to the case $d=3$ and recover the necessary conditions shown in \cite{PhysRevLett.99.250405}, along a set of sufficient conditions.

\section{Exact Results}

In this section, we provide an analogue of the half degree principle for the cone $P_{n,d}$.  More specifically, we show that to test nonnegativity of $Q\in V_{n,d}$  on $K_n^d$ we only need to test that it takes nonnegative values on points $\mathbf{x}=(\mathbf{x}_1, \cdots, \mathbf{x}_n)$ on $K_n^d$ with at most $2d$ of them distinct. Observe that such points form a submanifold of $K_n^d$ of dimension $2d(d-1)$, and this is independent of $n$, while dimension of $K_n^d$ is $dn$. As warm-up and illustration we start with the one-dimensional case $d=1$.

\subsection{One Dimensional Case $d=1$} As a simple first step we characterize the cones $C_{n,1}$ and $P_{n,1}$. The relevant unit ball is $D^1=[-1,1]$. A point in $K_n^1$ is represented by $(x_1, x_2, \cdots, x_n)$ where $x_i \in [-1,1]$. Quadratic polynomial $Q\in V_{n,1}$ has the form:
\begin{equation*}
Q = A_0 + A_1 s_1 + A_{11}s_{11}.
\end{equation*}
Note that $Q$ has only linear terms in each variable $x_i$. Therefore, extreme values of $Q$ occur when $x_i =\pm 1$. In other words, $Q$ is nonnegative on $K_n^1$ if and only if it is nonnegative on the discrete hypercube $H_n=\{-1,+1\}^n$. For a point  $\mathbf{x}=(x_1,\cdots,x_n) \in H_n$ with $k$ entries equal to $-1$ and  $n-k$ entries equal to $+1$ we have $s_1(\mathbf{x})=n-2k$ and $s_{11}(\mathbf{x}) = (n-2k)^2-n$. We immediately obtain the following Proposition:

\begin{prop}\label{prop:d1}
A polynomial $Q=  A_0 + A_1 s_1 + A_{11}s_{11} \in P_{n,1}$ if and only if
$$A_0+A_1(n-2k)+A_{11}((n-2k)^2-n)\geq 0,$$
holds for $k=0,1,\cdots,n$.
\end{prop}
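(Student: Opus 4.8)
The plan is to combine the two observations already recorded in the preamble: that $Q$ is multilinear, so that nonnegativity on the solid box $K_n^1=[-1,1]^n$ is equivalent to nonnegativity on its vertex set $H_n=\{-1,+1\}^n$, and that the values of $s_1$ and $s_{11}$ at a vertex depend only on the number $k$ of coordinates equal to $-1$. Once both are in hand, the statement reduces to a substitution.

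First I would make the reduction to the hypercube precise. Fix all coordinates except $x_i$; since $Q$ is square-free, it is an affine function of $x_i$ alone, and an affine function on $[-1,1]$ attains its minimum at one of the endpoints $x_i=\pm 1$. Replacing $x_i$ by a minimizing endpoint does not increase $Q$, and iterating over $i=1,\dots,n$ shows that $\min_{K_n^1}Q=\min_{H_n}Q$. In particular $Q\in P_{n,1}$ if and only if $Q\ge 0$ on $H_n$.

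Next I would exploit the symmetry. At a vertex of $H_n$ with exactly $k$ coordinates equal to $-1$ one has $s_1=(n-k)-k=n-2k$, while, using $x_i^2=1$, $s_{11}=\left(\sum_i x_i\right)^2-\sum_i x_i^2=(n-2k)^2-n$. Hence the value of $Q$ at any such vertex equals $f(k):=A_0+A_1(n-2k)+A_{11}\big((n-2k)^2-n\big)$, which depends only on $k$.

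Finally I would close both directions. For each $k\in\{0,1,\dots,n\}$ there is at least one vertex of $H_n$ with $k$ entries equal to $-1$, and conversely every vertex realizes some such $k$; therefore $Q\ge 0$ on $H_n$ is equivalent to $f(k)\ge 0$ for all $k=0,1,\dots,n$. Combined with the reduction of the first step, this yields the stated equivalence. I do not expect a genuine obstacle here: the only step requiring care is the multilinear extremum argument justifying the passage from $K_n^1$ to $H_n$, and this is already established before the statement, so the remainder is a direct evaluation.
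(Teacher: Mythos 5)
Your proof is correct and follows the same route as the paper: reduce to the discrete hypercube $H_n=\{-1,+1\}^n$ via multilinearity, then evaluate $s_1=n-2k$ and $s_{11}=(n-2k)^2-n$ at a vertex with $k$ entries equal to $-1$. Your treatment is simply a more explicit write-up of the paper's argument, including the coordinate-by-coordinate endpoint minimization that the paper states in one line.
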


Each inequality represents a side of the polygon shown in \figref{FIG1}. The dual cone $C_{n,1}$ of moment sequences coming from measures is also a polyhedral cone defined by $n+1$ inequalities. The defining inequalities of $C_{n,1}$ follow from Proposition \ref{prop:d1}:

\begin{cor}
A vector $\mathbf{m}=(m_0, m_1, m_{11})\in C_{n,1}$ if and only if
$$m_0(n-1+(n-1-2k)^2)-m_1(n-1-2k)+m_{11} \geq 0,$$
holds for $k=0,1,\cdots,n$.
\end{cor}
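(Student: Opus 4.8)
The plan is to derive the corollary as the dual statement of Proposition \ref{prop:d1} by passing from the facet inequalities of $P_{n,1}$ to the extreme rays of its dual cone $C_{n,1}$. Since $C_{n,1}$ and $P_{n,1}$ are dual closed convex cones, a vector $\mathbf{m}=(m_0,m_1,m_{11})$ lies in $C_{n,1}$ if and only if $\langle \mathbf{m}, Q\rangle \geq 0$ for every $Q \in P_{n,1}$, where the pairing between coefficient tuples and moment tuples is $\langle \mathbf{m}, Q\rangle = A_0 m_0 + A_1 m_1 + A_{11}m_{11}$. Because $P_{n,1}$ is a polyhedral cone cut out by the $n+1$ facet inequalities of Proposition \ref{prop:d1}, its dual $C_{n,1}$ is the conical hull of the $n+1$ corresponding extreme rays, and $\mathbf{m}\in C_{n,1}$ if and only if $\mathbf{m}$ pairs nonnegatively with each extreme ray. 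The key observation is that these extreme rays are exactly the moment vectors of the point evaluations (Dirac measures) at the vertices of $H_n$.

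Concretely, I would first record that for a point $\mathbf{x}\in H_n$ with $k$ entries equal to $-1$, the evaluation functional $Q\mapsto Q(\mathbf{x})$ corresponds to the moment vector
\begin{equation*}
\mathbf{e}_k = \bigl(1,\; n-2k,\; (n-2k)^2-n\bigr),
\end{equation*}
read off directly from the values $s_1(\mathbf{x})=n-2k$ and $s_{11}(\mathbf{x})=(n-2k)^2-n$ computed in the text. Each $\mathbf{e}_k$ is the moment sequence of the Dirac measure at such an $\mathbf{x}$, hence lies in $C_{n,1}$, and Proposition \ref{prop:d1} says precisely that $P_{n,1}=\{Q:\langle \mathbf{e}_k,Q\rangle\geq 0,\ k=0,\dots,n\}$. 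Taking the dual of this polyhedral description, and using that $C_{n,1}=P_{n,1}^{\vee}$ is closed (guaranteed by compactness of $K_n^1$, as noted in the excerpt), gives $C_{n,1}=\operatorname{conical.hull}\{\mathbf{e}_0,\dots,\mathbf{e}_n\}$.

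To convert this generator description into the stated inequality description, I would identify the supporting hyperplanes of $C_{n,1}$, i.e.\ the facets of the cone spanned by the $\mathbf{e}_k$. Since the three-dimensional cone is spanned by points lying on the parabola $m_{11}=m_1^2-n$ (scaled by $m_0$), consecutive generators $\mathbf{e}_k,\mathbf{e}_{k+1}$ span the two-dimensional faces, and the outward facet normal to the face through $\mathbf{e}_k,\mathbf{e}_{k+1}$ is the coefficient vector of a nonnegative polynomial vanishing at both corresponding hypercube points. I would verify that the $k$-th facet inequality is
\begin{equation*}
m_0\bigl(n-1+(n-1-2k)^2\bigr)-m_1(n-1-2k)+m_{11}\geq 0,
\end{equation*}
by checking that the associated linear functional annihilates $\mathbf{e}_k$ and $\mathbf{e}_{k+1}$ and is nonnegative on the remaining generators; this last sign check is the routine but essential computation establishing that this functional is a genuine supporting hyperplane rather than a spurious one.

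The main obstacle is the convexity-geometry bookkeeping in the final step: confirming that the $n+1$ candidate inequalities are exactly the facets, with none redundant and none missing. This amounts to checking that each listed functional is nonnegative on all generators $\mathbf{e}_j$ (so it supports the cone) and strictly positive on all but two consecutive ones (so it defines a genuine two-dimensional face). Because the generators lie on a convex parabolic curve, their convex position makes this verification clean, but it is the step that genuinely uses the geometry; the rest is formal duality. I would phrase the argument so that this parabola-convexity observation does the work of guaranteeing that consecutive-generator faces give precisely the claimed $n+1$ facets.
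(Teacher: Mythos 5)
Your strategy --- dualize Proposition \ref{prop:d1} to get $C_{n,1}=\operatorname{ConicalHull}\{\mathbf{e}_0,\dots,\mathbf{e}_n\}$ with $\mathbf{e}_k=\bigl(1,\,n-2k,\,(n-2k)^2-n\bigr)$, and then extract the facets of this cone over a parabola --- is the right one, and it is all the paper itself intends (the paper gives no proof beyond saying the inequalities follow from Proposition \ref{prop:d1}). The problem is that the two computations you defer as ``routine but essential'' are exactly where the argument breaks, and they cannot be completed, because the statement as printed is false. Concretely, the vector $(m_0,m_1,m_{11})=(0,0,1)$ satisfies every printed inequality (each reduces to $m_{11}=1\geq 0$), yet it lies outside $C_{n,1}$: a nonnegative measure with $m_0=0$ is the zero measure, so all its moments vanish. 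Hence the ``if'' direction cannot be established by any argument, and a proof plan that asserts the final verification will succeed has a genuine gap.

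Carrying out your deferred checks honestly would have exposed this in two places. First, the annihilation check fails: writing $a=n-1-2k$ and $b=n-2j$, the printed coefficient vector $\bigl(n-1+a^2,\,-a,\,1\bigr)$ pairs with $\mathbf{e}_j$ to give $a^2-ab+b^2-1$, which at $j=k$ (where $b=a+1$) equals $a(a+1)\neq 0$ in general. The functional that actually vanishes at $\mathbf{e}_k$ and $\mathbf{e}_{k+1}$ is $\bigl(n-1+a^2,\,-2a,\,1\bigr)$ --- note the factor $2$ --- whose pairing with $\mathbf{e}_j$ is $(a-b)^2-1=\bigl(2(j-k)-1\bigr)^2-1$, nonnegative for every $j$ and zero exactly for $j\in\{k,k+1\}$. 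Second, your facet bookkeeping is internally inconsistent: consecutive pairs exist only for $k=0,\dots,n-1$, so they supply only $n$ facets (for $k=n$ there is no $\mathbf{e}_{n+1}$); the remaining facet of the hull of $n+1$ points on an upward parabola is the edge joining the two endpoints $\mathbf{e}_0$ and $\mathbf{e}_n$, which yields $n(n-1)m_0-m_{11}\geq 0$ --- an inequality not of the printed form, and precisely the one that excludes $(0,0,1)$. Executed correctly, your method proves the corrected statement: $\mathbf{m}\in C_{n,1}$ if and only if $m_0\bigl(n-1+(n-1-2k)^2\bigr)-2m_1(n-1-2k)+m_{11}\geq 0$ for $k=0,\dots,n-1$, together with $n(n-1)m_0-m_{11}\geq 0$; this is the set of $n+1$ inequalities the corollary should have stated.
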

Thus, when $d=1$, $C_{n,1}$ and $P_{n,1}$ are both basic semi-algebraic sets, and are given by $n+1$ linear inequalities. 

\subsection{General Dimension}

When $d>1$, $C_{n,d}$ is the conical hull of a semi-algebraic set. Indeed,
\begin{equation*}
C_{n,d} = \operatorname{ConicalHull}\left\{ (1, s_1(\mathbf{x}), \cdots, s_d(\mathbf{x}), s_{11}(\mathbf{x}), \cdots, s_{dd}(\mathbf{x})): \\ \mathbf{x} \in K^d_n \right\}.
\end{equation*}
$K_n^d$ is a basic semi-algebraic set and therefore, its image under a polynomial function is semi-algebraic. A polynomial $Q\in P_{n,d}$ is linear in each of its arguments and therefore, it is nonnegative on $K_n^d$ if and only if it is nonnegative on its boundary, $\partial K_n^d=S^{d-1}\times \cdots \times S^{d-1}$. Therefore, membership of a polynomial $Q$ in $P_{n,d}$ is validated by verifying its non-negativity on an $n(d-1)$ dimensional manifold. However, in theorem \ref{thm:halfdeg}, we show that it suffices to verify its non-negativity on finitely many copies ($O(n^{2d-1})$) of a $2d(d-1)$ dimensional manifold. This theorem is an analogue of the degree principle \cites{MR1996126, MR2864859}. See also \cite{MR3424059} and \cite{MR3471187} for related results. 


We now state and prove the main theorem of this section:

\begin{thm}\label{thm:halfdeg}A polynomial $Q \in V_{n,d}$ is nonnegative on $S^{d-1}\times \cdots \times S^{d-1}$ if and only if $Q(\mathbf{x}_1, \cdots, \mathbf{x}_n)$ is nonnegative for all sets of $n$ points $\mathbf{x}_1, \cdots, \mathbf{x}_n$ on $S^{d-1}$ with at most $2d$ of them distinct.
\end{thm}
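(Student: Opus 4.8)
The forward (``only if'') direction is immediate, since configurations with at most $2d$ distinct points form a subset of $(S^{d-1})^n$. For the reverse direction the plan is to show that the minimum of $Q$ over $(S^{d-1})^n$ is always attained at a configuration with at most $2d$ distinct points; nonnegativity on such configurations then forces $Q\geq 0$ everywhere. Since $(S^{d-1})^n$ is compact the minimum is attained at some $\mathbf{x}^*=(\mathbf{x}_1^*,\ldots,\mathbf{x}_n^*)$. The key structural fact I would exploit is that each basis polynomial $s_\alpha$ and $s_{\alpha\alpha}$, and hence $Q$, is affine linear in each individual point $\mathbf{x}_i$ when the others are held fixed (the square-free/multilinear property). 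Consequently, writing $Q$ as a function of $\mathbf{x}_i$ alone gives $Q=\langle c_i,\mathbf{x}_i\rangle+\text{const}$, where the coefficient vector has entries $(c_i)_\alpha=A_\alpha+2A_{\alpha\alpha}(s_\alpha^*-\xi_{i,\alpha}^*)$ and depends only on the other points.

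By coordinatewise optimality, for each $i$ the point $\mathbf{x}_i^*$ minimizes a linear functional over $S^{d-1}$, so $\mathbf{x}_i^*=-c_i/\lambda_i$ with $\lambda_i:=\|c_i\|$ whenever $c_i\neq 0$. Substituting $(c_i)_\alpha=-\lambda_i\xi_{i,\alpha}^*$ into the formula above and solving for $\xi_{i,\alpha}^*$ yields the crucial identity $\xi_{i,\alpha}^*=\frac{B_\alpha}{\lambda_i-2A_{\alpha\alpha}}$, where $B_\alpha:=-(A_\alpha+2A_{\alpha\alpha}s_\alpha^*)$, and where the constants $B_\alpha$ and $A_{\alpha\alpha}$ are the same for all $i$. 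Thus every point $\mathbf{x}_i^*$ is completely determined by the single scalar $\lambda_i$, and two points can differ only if their $\lambda$-values differ. Imposing the sphere constraint $\sum_\alpha(\xi_{i,\alpha}^*)^2=1$ shows that each $\lambda_i$ is a root of $F(\lambda):=\sum_{\alpha=1}^d\frac{B_\alpha^2}{(\lambda-2A_{\alpha\alpha})^2}=1$. Clearing denominators (when the $A_{\alpha\alpha}$ are distinct) turns this into a polynomial equation whose right-hand side $\prod_\alpha(\lambda-2A_{\alpha\alpha})^2$ has degree $2d$ while the left-hand side has degree $2d-2$; hence $F(\lambda)=1$ has at most $2d$ real solutions. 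Since each distinct point corresponds to a distinct root, $\mathbf{x}^*$ has at most $2d$ distinct points, which is exactly what is needed.

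The main work lies in the degenerate configurations that the clean computation above sidesteps: indices $i$ with $c_i=0$ (where $\mathbf{x}_i^*$ is a priori free), coefficients $A_{\alpha\alpha}$ that vanish or coincide (so the pole structure of $F$ degenerates and the degree count changes), and the exceptional case $B_\alpha\equiv 0$ (where the displayed identity is vacuous). I would dispatch these by a genericity-plus-continuity argument: both $\mu(Q):=\min_{(S^{d-1})^n}Q$ and $\mu'(Q):=\min Q$ over configurations with at most $2d$ distinct points are continuous functions of the coefficient tuple $(A_0,A_\alpha,A_{\alpha\alpha})$ (the latter because the locus of such configurations is itself compact), and $\mu\leq\mu'$ always. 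It therefore suffices to prove $\mu=\mu'$ on the dense open set where all $A_{\alpha\alpha}$ are distinct and nonzero; there one checks that $c_i=0$ still forces $\mathbf{x}_i^*$ to equal the value obtained by setting $\lambda_i=0$ in the identity above (so the free points are in fact determined), while the residual case $B_\alpha\equiv 0$ is handled directly by observing that the optimality relations then confine each $\mathbf{x}_i^*$ to the $2d$ signed standard basis vectors $\pm e_1,\ldots,\pm e_d$. Establishing the first-order optimality relations rigorously and controlling these degenerate strata is, I expect, the principal obstacle; the degree count for $F$ is the conceptual heart but becomes routine once the identity for $\xi_{i,\alpha}^*$ is in hand.
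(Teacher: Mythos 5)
Your overall strategy coincides with the paper's: pass to a global minimizer on the compact set $(S^{d-1})^n$, exploit multilinearity to get first-order optimality conditions on each factor sphere, reduce to generic coefficients (you via continuity of the two minimum functions, the paper via an openness-plus-density argument), and bound the number of distinct points by the number of roots of an explicit degree-$2d$ polynomial. Your parametrization of the minimizer points by the multiplier $\lambda_i$ rather than by a coordinate is only a cosmetic difference: the paper eliminates the multiplier and works with $t=\xi_{i,1}$ (at the cost of the extra step ``WLOG $R_1\neq 0$''), while you keep $\lambda_i$; the resulting degree-$2d$ equations are equivalent.

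There is, however, a genuine gap at the crucial counting step. Your claim that ``every point $\mathbf{x}_i^*$ is completely determined by the single scalar $\lambda_i$,'' hence that distinct points give distinct roots of $F(\lambda)=1$, fails exactly when $\lambda_i=2A_{\alpha_0\alpha_0}$ for some $\alpha_0$. In that case the relation $B_{\alpha_0}=(\lambda_i-2A_{\alpha_0\alpha_0})\xi^*_{i,\alpha_0}$ forces $B_{\alpha_0}=0$ but leaves $\xi^*_{i,\alpha_0}$ undetermined; the sphere constraint then fixes it only up to sign, so a single root of $F=1$ can produce \emph{two} distinct minimizer points. This situation --- some but not all of the $B_\alpha$ vanishing --- is missing from your list of degenerate cases ($c_i=0$; coincident or vanishing $A_{\alpha\alpha}$; $B_\alpha\equiv 0$), and your genericity reduction cannot remove it, because the $B_\alpha$ depend on the minimizer itself, not just on the coefficient tuple $(A_0,A_\alpha,A_{\alpha\alpha})$. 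As written, your argument only yields the weaker bound of $4d$ distinct points (two per root). The paper closes precisely this gap by a multiplicity argument: if $B_{\alpha_0}=0$, then $(\lambda-2A_{\alpha_0\alpha_0})^2$ divides the cleared polynomial
$G(\lambda)=\prod_{\alpha=1}^d(\lambda-2A_{\alpha\alpha})^2-\sum_{\beta=1}^d B_\beta^2\prod_{\alpha\neq\beta}(\lambda-2A_{\alpha\alpha})^2$,
so any root producing two vectors is a double root; counting roots with multiplicity (simple roots give at most one point, multiple roots at most two) restores the bound $2d=\deg G$. Your proof becomes complete once you add this observation.
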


\begin{proof} We will prove this theorem via an application of Lagrange multipliers. Recall that $Q\in V_{n,d}$ has the form $$Q = A_0+\sum_{\alpha=1}^d (A_{\alpha}s_\alpha+A_{\alpha\alpha}s_{\alpha\alpha}).$$
Observe that polynomials $Q$ all of whose global minima on $(S^{d-1})^n$ contain more than $2d$ distinct vectors form an open subset of $V_{n,d}$, since a sufficiently small perturbation of a polynomial with all minima containing more than $2d$ distinct vectors results in a polynomial with all minima still containing more than $2d$ distinct vectors. Additionally, polynomials with all coefficients $A_{\alpha}$ and $A_{\alpha\alpha}$ distinct also form an open subset of $V_{n,d}$. Therefore, if a counterexample to the Theorem exists, it can be chosen with $A_{\alpha}$ and $A_{\alpha\alpha}$ distinct, and thus it suffices to prove the Theorem for polynomials $Q$ with all coefficients $A_{\alpha}$ and $A_{\alpha\alpha}$ distinct.

Let $\mathbf{x}^*=(\mathbf{x}_1^*, \cdots, \mathbf{x}_n^*)$ be a global minimum of $Q$ on $\left(S^{d-1}\right)^n$ and let $\mathbf{x}_i^*=(\xi_{i,1}, \cdots, \xi_{i, d})$. Since the global minimum is a critical point, it satisfies the following Lagrange multiplier equations for $\alpha=1,\cdots,d$ and $i=1,\cdots,n$:
\begin{equation*}\label{lagrange_multiplier}
A_{\alpha}+2A_{\alpha \alpha}s_{\alpha}(\mathbf{x}^*) =(\lambda_i+2A_{\alpha \alpha})\xi_{i,{\alpha}},
\end{equation*}
where $\lambda_i\in \R$ are the Lagrange multipliers. Note that the left-hand side of the above equation is independent of the index $i$. We therefore introduce 
$$R_{\alpha}=A_{\alpha}+2A_{\alpha\alpha}s_{\alpha}(\mathbf{x}^*), \hspace{.4 cm} \alpha=1,\cdots, d.$$
and see that
\begin{align}\label{lagrange_multiplier_R}
R_{\alpha}=(\lambda_i+2A_{\alpha \alpha})\xi_{i,{\alpha}}
\end{align}
We first settle the simple case when $R_{\alpha}=0$ for $\alpha = 1, \cdots, d$.  It follows from \eqref{lagrange_multiplier_R} that $\xi_{i, \alpha}=0$ unless $(\lambda_i+2A_{\alpha\alpha})=0$. The latter can hold for at most one value of $\alpha$ for any given $i$ since the coefficients $A_{\alpha\alpha}$ are distinct. Thus, $\xi_{i, \alpha}=0$ for all but one value, $\alpha= \alpha_i$  for which we must have $\xi_{i, \alpha_i}= \pm 1$, since $\sum_{\alpha}(\xi_{i,\alpha})^2 = 1$. Therefore, it follows that at most $2d$ of the vectors $\mathbf{x}_i^*$ can be distinct.

Without loss of generality, we may now assume that $R_1  \neq 0$. This also implies that $\xi_{i, 1}\neq 0$ for $i=1, \cdots, n$. Eliminating $\lambda_i$ from Equations \eqref{lagrange_multiplier_R}, we see that
\begin{equation}\label{elimitation}
(R_1+2(A_{\alpha\alpha}-A_{11})\xi_{i,1})\xi_{i,{\alpha}} =R_{\alpha}\xi_{i, 1} 
\end{equation}
Combining this together with the  equations $\sum_{\alpha =1}^d \xi_{i, \alpha}^2=1$, we see that $t=\xi_{i, 1}$ is a solution of following equation:
\begin{equation*}
\sum_{\beta=1}^{d}(R_{\beta}t)^2\prod_{\alpha\neq \beta}^d (R_1+2(A_{\alpha\alpha}-A_{11})t)^2  =  \prod_{\alpha=1}^d (R_1+2(A_{\alpha\alpha}-A_{11})t)^2\sum_{\beta=1}^{d} \xi_{i, \beta}^2 = \prod_{\alpha=1}^d (R_1+2(A_{\alpha\alpha}-A_{11})t)^2.
\end{equation*}
In other words, $\xi_{i, 1}$ is a root of the polynomial
\begin{equation*}\label{Ch_polynomial}
P(t)= \sum_{\beta=1}^{d}(R_{\beta}t)^2\prod_{\alpha\neq \beta}^d (R_1+2(A_{\alpha\alpha}-A_{11})t)^2  -  \prod_{\alpha=1}^d (R_1+2(A_{\alpha\alpha}-A_{11})t)^2
\end{equation*}
for every $i$. Note that the leading term of $P(t)$ is $t^{2d} \times R_1^2\Pi_{\alpha=2}^d4(A_{\alpha\alpha}-A_{11})^2$ and the constant term is $-R_1^{2d}$. Therefore, $P(t)$ has degree $2d$, and thus $\xi_{i,1}$ can take at most $2d$ distinct values. Next we show how a vector $\mathbf{x}_i^*$ can be reconstructed from $\xi_{i,1}$.

Observe from \eqref{elimitation} that $\xi_{i,\alpha}$ is determined by $\xi_{i,1}$ as
\begin{equation*}\label{solution_xi}
\xi_{i, \alpha} = \frac{R_{\alpha}\xi_{i,1}}{R_1+2(A_{\alpha\alpha}-A_{11})\xi_{i,1}}.
\end{equation*}
unless  $R_1+2(A_{\alpha\alpha}-A_{11})\xi_{i,1} =0$ for some $\alpha$. Such an $\alpha=\alpha_0$, if it exists, has to be unique for a given value of $\xi_{i,1}$ since all coefficients $A_{\alpha\alpha}$ are distinct.  Therefore, if there is such an $\alpha_0$ for a given root $\xi_{i, 1}$, the coordinate $\xi_{i, \alpha}$ is uniquely determined by $\xi_{i,1}$ for all $\alpha$ excluding $\alpha_0$. The coordinate $\xi_{i, \alpha_0}$ is determined up to a sign from $\sum_{\alpha =1}^d \xi_{i, \alpha}^2=1$. Therefore, every root of $P(t)$ produces at most two distinct vectors $\mathbf{x}_i$. We next show that those roots that produce two vectors are indeed double roots. 

As argued above, if $\xi_{i, 1}$ is a root of $P(t)$ that produces two vectors then there is one $\alpha_0$ for which $R_1+2(A_{\alpha_0\alpha_0}-A_{11})\xi_{i,1} =0$ and Equation \eqref{elimitation} implies that $R_{\alpha_0}=0$. With this, we can deduce that $(R_1+2(A_{\alpha_0\alpha_0}-A_{11})t)^2$ divides $P(t)$ and therefore, $\xi_{i, \alpha}$ is a double root of $P(t)$. Therefore, a simple root of $P(t)$ produces a unique vector $\mathbf{x_i}$, while a multiple root of $P(t)$ produces at most two vectors $\mathbf{x_i}$. Since degree of $P(t)$ is at most $2d$, it follows that at most $2d$ of vectors $\mathbf{x_i}$ are distinct.

\end{proof}

\begin{remark}
There are $\binom{n+2d-1}{n}$ distinct ways of populating a set $\{\mathbf{x}_1, \cdots, \mathbf{x}_n\}$ using $2d$ distinct points on $S^{d-1}$. Therefore, this theorem reduces the search space of non-negativity of $Q$ from an $n(d-1)$ dimensional manifold to $\binom{n+2d-1}{n}$ copies of a $2d(d-1)$ dimensional manifold. 
\end{remark}

\section{Sum of Squares approximation of Nonnegative Polynomials}\label{sec:asymp}

In this section, we develop an asymptotic approximation of $C_{n, d}$ and its dual cone $P_{n, d}$. We begin with the illustrative case of $d=1$, where we develop the main ideas of the proof.

\subsection{The case $d=1$} \label{sec:ad1}
Although the case $d=1$ has a complete solution, i.e., membership of a quadratic $Q=A_0+A_1s_1+A_{11}s_{11}$ in $P_{n,1}$ can be checked by the $n+1$ inequalities in Proposition \ref{prop:d1}, it is a convenient platform to illustrate the ideas that lead us to the main result of this section. As shown in Proposition \ref{prop:d1},  a quadratic $Q$ lies in $P_{n,1}$ if and only if:
\begin{equation*}
A_0-nA_{11} + \sqrt{n}A_1 \left(\frac{n-2k}{\sqrt{n}}\right)  + nA_{11} \left(\frac{n-2k}{\sqrt{n}}\right)^2  \geq 0
\end{equation*} 
for $k=0,1,\cdots, n$. Note that the above expression has been cast in a way that is suggestive of introducing a new variable $X$ to take the place of $\frac{n-2k}{\sqrt{n}}$ and a polynomial $$P_Q(X) = A_0-nA_{11} + \sqrt{n}A_1X+nA_{11}X^2$$ We may rewrite the above conditions as $P_Q(X) \geq 0$ for $X=-\sqrt{n}, -\sqrt{n}+2/\sqrt{n}, \cdots, +\sqrt{n}$. In other words, $Q\in P_{n,1}$ if and only if $P_Q(X)\geq 0$ on $n+1$ evenly spaced points in $[-\sqrt{n}, \sqrt{n}]$. Given that the spacing $2/\sqrt{n}$ approaches zero as $n$ approaches infinity, we are prompted to define the following cone:
\begin{equation*}
\Sigma_{n,1} = \left\{(A_0, A_1, A_{11}): \ P_Q(X) \geq 0 \ \text{for all} \ X\in [-\sqrt{n}, \sqrt{n}]\right\}
\end{equation*}
Clearly we have $\Sigma_{n,1}\subseteq P_{n,1}$. Since $\Sigma_{n,1}$ is a set of univariate quadratics nonnegative on a closed interval, the cone $\Sigma_{n,1}$ is easily characterized and inclusion in it provides a sufficient condition for inclusion in $P_{n,1}$. We next show that a necessary condition can be obtained by slightly enlarging the cone $\Sigma_{n,1}$.
\begin{equation*}
\Sigma_{n,1}' = \left\{ (A_0, A_1, A_{11}): \ \left(\frac{n}{n-1}A_0, A_1, A_{11}\right) \in \Sigma_{n,1} \right\}
\end{equation*} 
Clearly, $\Sigma_{n,1}\subset \Sigma_{n,1}'$. Further, we show, in Proposition \ref{prop:d11}, that $\Sigma_{n,1}'$ contains $P_{n,1}$. In the limit of large $n$, the cones $\Sigma_{n,1}$ and $\Sigma'_{n,1}$ converge.
\begin{proposition}\label{prop:d11} $  \Sigma_{n,1} \subseteq P_{n,1} \subseteq \Sigma_{n,1}'$
\end{proposition}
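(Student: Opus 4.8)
The plan is to prove the two inclusions separately. The inclusion $\Sigma_{n,1} \subseteq P_{n,1}$ is immediate: if $(A_0,A_1,A_{11}) \in \Sigma_{n,1}$ then $P_Q(X) \geq 0$ on all of $[-\sqrt n,\sqrt n]$, and in particular at the $n+1$ evenly spaced sample points $X_k = (n-2k)/\sqrt n$, $k=0,\dots,n$; by Proposition \ref{prop:d1} this is exactly the statement that $Q \in P_{n,1}$. All the work is in the reverse inclusion $P_{n,1} \subseteq \Sigma_{n,1}'$, which amounts to controlling how far the quadratic $P_Q$, known to be nonnegative at the sample points, can dip in between them.

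For that inclusion, write $\tilde P_Q$ for the univariate quadratic attached to the scaled coefficients $(\tfrac{n}{n-1}A_0, A_1, A_{11})$. Since these coefficients differ from $(A_0,A_1,A_{11})$ only in the constant term, one has $\tilde P_Q(X) = P_Q(X) + \tfrac{A_0}{n-1}$. Thus $(A_0,A_1,A_{11}) \in \Sigma_{n,1}'$ is equivalent to the estimate $P_Q(X) \geq -\tfrac{A_0}{n-1}$ for every $X \in [-\sqrt n,\sqrt n]$, and I would establish this by splitting into cases according to the sign of the leading coefficient $nA_{11}$ of $P_Q$.

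In the concave and affine case $A_{11}\leq 0$, the function $P_Q$ attains its minimum on $[-\sqrt n,\sqrt n]$ at an endpoint $X=\pm\sqrt n$; but these are precisely the sample points $X_0$ and $X_n$, where Proposition \ref{prop:d1} gives $P_Q \geq 0$. Hence $P_Q \geq 0$ on the whole interval, and it remains only to check that the shift $\tfrac{A_0}{n-1}$ is nonnegative: adding the two endpoint constraints yields $2A_0 + 2A_{11}(n^2-n) \geq 0$, i.e. $A_0 \geq -A_{11}n(n-1) \geq 0$, so $\tilde P_Q = P_Q + \tfrac{A_0}{n-1} \geq 0$.

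The convex case $A_{11}>0$ is the crux. On each subinterval $[X_{k+1},X_k]$ (consecutive samples are spaced $h = 2/\sqrt n$ apart) I would write $P_Q(X) = L_k(X) + nA_{11}(X-X_k)(X-X_{k+1})$, where $L_k$ is the affine interpolant of $P_Q$ at the two endpoints. Since $P_Q(X_k),P_Q(X_{k+1}) \geq 0$, the interpolant $L_k$ is nonnegative on the subinterval, while the product term is at least its midpoint value $-nA_{11}h^2/4 = -A_{11}$. Hence $P_Q(X) \geq -A_{11}$ throughout $[-\sqrt n,\sqrt n]$, and the proof closes once $A_0 \geq (n-1)A_{11}$, which comes from the sample constraint(s) nearest $s_1 = 0$ (the single central constraint $A_0 \geq nA_{11}$ when $n$ is even, or the sum of the two constraints at $n-2k=\pm 1$ when $n$ is odd). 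The main obstacle is exactly this case: the worst-case dip of $A_{11}$ between samples must be matched to the enlargement factor $\tfrac{n}{n-1}$, and sharpness of the proposition hinges on the fact that the central constraint supplies precisely the slack $\tfrac{A_0}{n-1} \geq A_{11}$ needed to absorb it, so the delicate part is tracking these constants and handling the parity of $n$ when identifying the central constraint.
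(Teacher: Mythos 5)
Your proof is correct, and it reaches the result by a genuinely different route than the paper. The paper exploits polyhedrality of $P_{n,1}$ (from Proposition \ref{prop:d1}) and reduces to extreme rays: for a polynomial spanning an extreme ray, $P_Q$ vanishes at two consecutive sample points, so its zeros are $2/\sqrt{n}$ apart and its minimum value is $-A_{11}$; combining this with $A_{11}\le A_0/(n-1)$ (also deduced from Proposition \ref{prop:d1}) gives $P_Q+\tfrac{A_0}{n-1}\ge 0$, and convexity of $\Sigma_{n,1}'$ finishes the argument. You instead prove the dip estimate $P_Q\ge -A_{11}$ for \emph{every} element of $P_{n,1}$, not just extreme rays, by writing $P_Q$ on each subinterval between consecutive samples as its affine interpolant plus $nA_{11}(X-X_k)(X-X_{k+1})$, treating the concave case $A_{11}\le 0$ separately (where $P_Q\ge 0$ outright since the minimum is at an endpoint sample). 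The two arguments hinge on exactly the same sharp constants --- the worst dip $A_{11}$ between samples spaced $2/\sqrt{n}$, and the bound $A_0\ge (n-1)A_{11}$ --- but your version buys robustness: it needs no classification of extreme rays, and in fact the paper's assertion that every extreme ray has zeros at two \emph{consecutive} sample points is slightly imprecise, since when $A_{11}<0$ the extreme ray with $P_Q$ proportional to $n-X^2$ vanishes only at the endpoints $\pm\sqrt{n}$, which are not consecutive (the conclusion still holds there because $P_Q\ge 0$ on the whole interval, which is precisely your concave case). Your explicit parity analysis establishing $A_0\ge(n-1)A_{11}$ from the central constraint(s) also fills in a step the paper cites without detail. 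The cost is a somewhat longer argument; the paper's extreme-ray reduction is shorter once one grants the facts about the polyhedral structure.
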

\begin{proof}
We show that all the extreme rays of $P_{n,1}$ are in $\Sigma_{n,1}'$. Let $Q = (A_0, A_{1}, A_{11}) \in P_{n,1}$ span an extreme ray. It follows that the corresponding polynomial $P_Q(X)$ takes a zero value at two consecutive points in the set $\{-\sqrt{n}, -\sqrt{n}+2/\sqrt{n}, \cdots, +\sqrt{n}\}$. The zeros of $P_Q$ are therefore separated by $2/\sqrt{n}$. Consequently, the minimum value of $P_Q$ is $-A_{11}$. Therefore, $P_Q+ A_{11}$ is nonnegative on $[-\sqrt{n}, \sqrt{n}]$. Also it follows from Proposition \ref{prop:d1} that $A_{11} \leq  \frac{A_0}{n-1}$ for any polynomial $Q\in P_{n, 1}$.  Thus, $P_Q +  \frac{A_0}{n-1}$ is nonnegative on $[-\sqrt{n}, \sqrt{n}]$. Thus, $Q \in \Sigma_{n,1}'$.
\end{proof}
\begin{figure}
\includegraphics[scale=0.45]{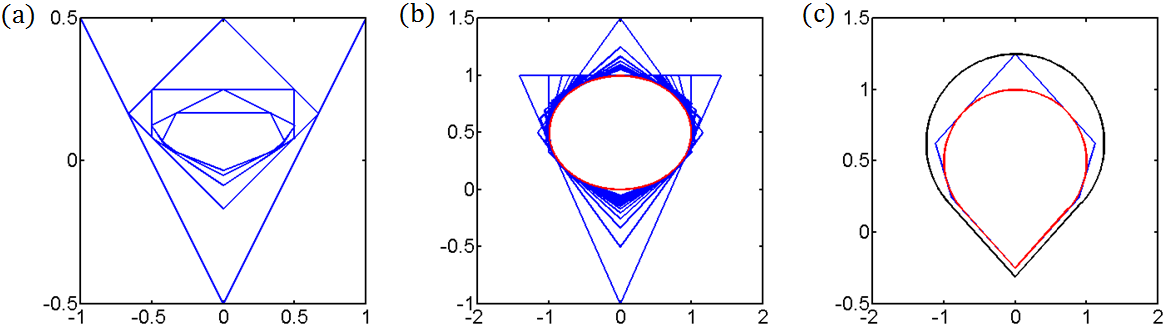}
\caption{(a) shows the cross sections of $P_{n,1}$ for $n= 2$ to $n= 6$. (b) shows the cross sections of the rescaled cones $\tilde{P}_{n,1}$ for $n= 2$ to $n= 20$ in blue and the cross section of the limiting cone $\tilde{\Sigma}$ in red. (c) shows $\tilde{P}_{5,1}$ in blue and the corresponding approximation, $\tilde{\Sigma}_{5,1}$ in red. The expanded cone, $\tilde{\Sigma}_{5,1}'$ is shown in black.}\label{FIG1}
\end{figure}
\noindent The cones $\Sigma_{n,1}$ and $\Sigma_{n,1}'$ that sandwich $P_{n,1}$ are better understood after a rescaling. Taking a cue from the definition of the polynomial $P_Q$, let us define $$\tilde{P}_{n,1} = \left\{\left(B_0, B_1, B_{11}\right): \ \left(B_0, \frac{B_1}{\sqrt{n}}, \frac{B_{11}}{n}\right) \in P_{n,1}\right\}  $$ $$\tilde{\Sigma}_{n,1} = \left\{(B_0, B_1, B_{11}): \ \left(B_0, \frac{B_1}{\sqrt{n}}, \frac{B_{11}}{n}\right) \in \Sigma_{n,1}\right\}  $$ and $$\tilde{\Sigma}_{n,1}' = \left\{(B_0, B_1, B_{11}): \ \left(B_0, \frac{B_1}{\sqrt{n}}, \frac{B_{11}}{n}\right) \in \Sigma_{n,1}'\right\}  $$
This rescaling enlarges them so that we may visualize their limiting behavior. In particular, $\tilde{\Sigma}_{n,1}= \left\{(B_0, B_1, B_{11}): \ B_0-B_{11} + B_1 X + B_{11}X^2 \geq 0, \hspace{.3cm} \text{for all} \hspace{3mm}  X \in [-\sqrt{n}, \sqrt{n}]\right\}$ satisfies $\tilde{\Sigma}_{n+1,1} \subseteq \tilde{\Sigma}_{n, 1}$, i.e., they are nested and the limiting cone $\tilde{\Sigma} = \cap_n \tilde{\Sigma}_{n,1}$ is the cone of all globally nonnegative quadratics, and has a simple characterization: $(B_0,B_1,B_{11}) \in \tilde{\Sigma}$ if and only if \begin{equation}\label{eqn:1cone}\begin{bmatrix} B_0-B_{11}& \frac{1}{2}B_1\\ \frac{1}{2}B_1&B_{11}\end{bmatrix}\succeq 0.
\end{equation}
Figure \ref{FIG1} shows the cross sections of $P_{n,1}$, as well as that of the rescaled cones $\tilde{P}_{n,1}$, $\tilde{\Sigma}_{n,1}$, $\tilde{\Sigma}_{n,1}'$, $\tilde{\Sigma}$. Note that the cones $\tilde{\Sigma}_{n,1}'$ are also nested and the corresponding limiting cone is also $\tilde{\Sigma}$. In this sense, the sufficient condition for inclusion in $P_{n,1}$, provided by $\Sigma_{n,1}$ and the necessary condition provided by $\Sigma_{n,1}'$ approach each other asymptotically. In the next section we generalize the cones $\Sigma_{n,1}$ and $\Sigma_{n,1}'$ to higher dimensions.

\subsection{General Dimension}
We proceed along similar lines as for the $d=1$ case to obtain a necessary and asymptotically sufficient condition for membership in $P_{n, d}$.  We refer to the polynomial $Q$ by its $(2d+1)$-tuple of rescaled coefficients. Analogous to the variable $X$ of the previous section, we define variables $X=(X_1,\cdots,X_d)$ and $Y=(Y_1,\cdots,Y_d)$ as:
\begin{equation*}
\begin{split}
X_{\alpha} &= \sum_{i=1}^n\frac{\xi_{i,\alpha}}{\sqrt{n}}, \hspace{.3cm} \alpha=1,\cdots,d,\\
Y_{\alpha} &= \sqrt{\sum_{i=1}^n\frac{\xi_{i,\alpha}^2}{n} - \frac{X_{\alpha}^2}{n}}, \hspace{.3cm}\alpha=1,\cdots,d,\\
\end{split}
\end{equation*}
with $\mathbf{x}=(\mathbf{x}_1, \cdots, \mathbf{x}_n)$ and $\mathbf{x}_i = (\xi_{i,1}, \cdots, \xi_{i, d})$. Note that the Cauchy-Schwartz inequality ensures that $Y_{\alpha}$ are well defined real numbers. In terms of these variables, we may re-write $Q$ as:
\begin{equation*}
Q(\mathbf{x}_1, \cdots, \mathbf{x}_n) = A_0 + \sum_{\alpha=1}^d \sqrt{n}A_{\alpha}X_{\alpha} + (n-1)\sum_{\alpha=1}^d A_{\alpha\alpha}X_{\alpha}^2 - n\sum_{\alpha=1}^d A_{\alpha\alpha}Y_{\alpha}^2 = P_Q(X, Y).
\end{equation*}
Note that $P_Q(X)$ is defined analogously to the previous section. For $\mathbf{x}\in K_n^d$ the variables $X_{\alpha}$ and $Y_{\alpha}$ satisfy $$\sum_{\alpha=1}^d Y_{\alpha}^2 + \frac{1}{n}X_{\alpha}^2 = ||Y||^2 +\frac{1}{n}||X||^2  \leq 1,$$ which follows from the conditions $||\mathbf{x}_i||\leq 1$.  

We now consider a convex cone in $V_{n,d}$ consisting of polynomials $Q$ such that $P_Q(X,Y)$ is nonnegative for all $(X,Y)$ such that $||Y||^2 +\frac{1}{n}||X||^2  \leq 1$. From the above discussion it follows that this cone lies inside $P_{n,d}$. We show in Theorem \ref{thm:sos} that this cone turns out to coincide with the sum-of-squares relaxation of $P_{n,d}$, defined as $\Sigma_{n,d}$ in the Introduction. To recall this definition, for $i=1,\cdots,n$ let $p_i=1-\sum_{j=1}^d \xi^2_{j,i}$. We defined the cone $\Sigma_{n,d}$ consisting of polynomials $Q(\mathbf{x})$ which can be written as:
$$Q(\mathbf{x})=\sum_{i=1}^r \ell_i^2(\mathbf{x})+\sum_{i=1}^d \lambda_i(1-p_i(\mathbf{x})),$$
where $\ell_i$ are linear polynomials and $\lambda_i \geq 0$. 

\begin{theorem}\label{thm:sos}
A polynomial $Q$ lies in $\Sigma_{n,d}$ if and only if $P_Q(X,Y) \geq 0$ for all  $X, Y$ such that $||Y||^2 + \frac{1}{n}||X||^2 \leq 1$. 
\end{theorem}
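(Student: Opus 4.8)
The plan is to prove both implications by routing through a single additive certificate supplied by the S-lemma: that $P_Q$ admits a representation $P_Q(X,Y)=\lambda\,g(X,Y)+\sigma(X,Y)$ with $\lambda\ge 0$ and $\sigma$ a globally nonnegative quadratic, where $g(X,Y):=1-\|Y\|^2-\tfrac1n\|X\|^2$ is the concave quadratic cutting out the region $\mathcal{D}=\{(X,Y):\|Y\|^2+\tfrac1n\|X\|^2\le 1\}$. Two algebraic identities drive the translation between $\mathbf{x}$-space and $(X,Y)$-space. First, pulling $g$ back along $X_\alpha=s_\alpha/\sqrt n$ and $Y_\alpha^2=\tfrac1n\sum_i\xi_{i,\alpha}^2-\tfrac1n X_\alpha^2$ gives $g=\tfrac1n\sum_{i=1}^n p_i$, since $\sum_i\|\mathbf{x}_i\|^2=n\|Y\|^2+\|X\|^2$. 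Second, the empirical-variance identity $Y_\alpha^2=\tfrac1{n^2}\sum_{i<j}(\xi_{i,\alpha}-\xi_{j,\alpha})^2$ exhibits each $Y_\alpha^2$ as a sum of squares of linear forms in $\mathbf{x}$; together with $\sqrt n A_\alpha X_\alpha=A_\alpha s_\alpha$ and $(n-1)X_\alpha^2-nY_\alpha^2=s_{\alpha\alpha}$ this also re-confirms the identity $P_Q(X(\mathbf{x}),Y(\mathbf{x}))=Q(\mathbf{x})$.

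For the direction ``$P_Q\ge 0$ on $\mathcal{D}$ $\implies Q\in\Sigma_{n,d}$'' I would first invoke the S-lemma: since $g$ is strictly feasible ($g(0)=1>0$) and $P_Q$ is quadratic, nonnegativity of $P_Q$ on $\{g\ge0\}$ yields $\lambda\ge0$ with $\sigma:=P_Q-\lambda g$ globally nonnegative. Because $P_Q$ and $g$ contain no cross terms, $\sigma$ is automatically diagonal, $\sigma=q_0(X)+\sum_\alpha e_\alpha Y_\alpha^2$, with $q_0$ a nonnegative quadratic in $X$ and each $e_\alpha\ge 0$. Pulling back, $q_0(X(\mathbf{x}))$ is a sum of squares of affine forms in $\mathbf{x}$, each $e_\alpha Y_\alpha^2$ is a sum of squares by the variance identity, and $\lambda g=\tfrac{\lambda}{n}\sum_i p_i$ with $\tfrac\lambda n\ge0$; adding these presents $Q$ in the defining form of $\Sigma_{n,d}$.

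For the converse, suppose $Q=\sum_k\ell_k^2+\sum_i\lambda_i p_i\in\Sigma_{n,d}$ with $\lambda_i\ge0$. Averaging this representation over $S_n$, which fixes $Q$ and permutes the $p_i$, replaces the linear part by a symmetric sum of squares $T$ and the second sum by $\bar\lambda\sum_i p_i$ with $\bar\lambda=\tfrac1n\sum_i\lambda_i\ge0$. Now $T=Q-\bar\lambda\sum_i p_i$ depends on $\mathbf{x}$ only through $(X,Y)$, so $T(\mathbf{x})=\sigma(X(\mathbf{x}),Y(\mathbf{x}))$ for the diagonal quadratic $\sigma:=P_Q-n\bar\lambda g$. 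Since $T$ is a sum of squares it is nonnegative on all of $\R^{nd}$; because the map $\mathbf{x}\mapsto(X,Y)$ surjects onto $\R^d\times\R^d_{\ge0}$ for $n\ge2$ while $\sigma$ depends on $Y$ only through the $Y_\alpha^2$, we conclude $\sigma\ge0$ on all of $\R^{2d}$. Thus $P_Q=n\bar\lambda g+\sigma$ with $n\bar\lambda\ge0$ and $\sigma\ge0$, so $P_Q\ge0$ on $\{g\ge0\}=\mathcal{D}$ directly, without even needing the hard direction of the S-lemma here.

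The main obstacle is the quantitative heart of the first implication: passing from ``$P_Q$ nonnegative on the ellipsoid $\mathcal{D}$'' to an explicit additive certificate. This is precisely where the S-lemma, i.e. exactness of the S-procedure for a single quadratic inequality with a strictly interior point, is indispensable and is the only non-elementary input. Once the certificate $P_Q=\lambda g+\sigma$ is available, the identities $g=\tfrac1n\sum_i p_i$ and $Y_\alpha^2=\tfrac1{n^2}\sum_{i<j}(\xi_{i,\alpha}-\xi_{j,\alpha})^2$ make the passage to and from $\Sigma_{n,d}$ purely formal. The one point demanding care on the converse is the symmetrization, which is exactly what guarantees that the non-square remainder $T$ factors through the averaged statistics $(X,Y)$ rather than depending on the individual points $\mathbf{x}_i$.
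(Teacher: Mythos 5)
Your proof is correct, and it takes a genuinely different route from the paper's. The paper establishes Theorem \ref{thm:sos} by the symmetry-reduction machinery: it decomposes the space of linear forms into irreducible $S_n$-modules, $V=(d+1)S^{(n)}\oplus dS^{(n-1,1)}$, parametrizes a symmetric certificate by a pair of positive semidefinite Gram matrices $(G,H)$ together with a multiplier $c\ge 0$, forces $H$ to be diagonal and $G$ to be arrow-shaped by matching the coefficients of $p_{\alpha\beta}$ and $q_{\alpha\beta}$, and solves for the entries to arrive at an explicit $(2d+2)\times(2d+2)$ linear matrix inequality for $\Sigma_{n,d}$, which is then identified (after a diagonal rescaling) with the LMI that the $S$-lemma produces for nonnegativity of $P_Q$ on the ellipsoid (Remark \ref{rem:psd}). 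You instead work at the level of additive certificates: for the hard direction you apply the $S$-lemma directly in $(X,Y)$-space to write $P_Q=\lambda g+\sigma$ with $\lambda\ge 0$ and $\sigma$ globally nonnegative (automatically free of cross terms, hence $\sigma=q_0(X)+\sum_\alpha e_\alpha Y_\alpha^2$ with $q_0\succeq 0$ and $e_\alpha\ge 0$), and pull this back along the identities $g=\tfrac 1n\sum_i p_i$ and $Y_\alpha^2=\tfrac 1{n^2}\sum_{i<j}(\xi_{i,\alpha}-\xi_{j,\alpha})^2$; for the converse you symmetrize a $\Sigma_{n,d}$-representation with the Reynolds operator, note that the symmetrized square part $T=Q-\bar\lambda\sum_i p_i$ factors through $(X,Y)$, and use surjectivity of $\mathbf{x}\mapsto(X,Y)$ onto $\R^d\times\R^d_{\ge 0}$ for $n\ge 2$ together with evenness in each $Y_\alpha$ to get $\sigma\ge 0$ everywhere. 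Both checks of your argument go through (the surjectivity claim, the variance identity, and the diagonal structure after the $S$-lemma are all correct). What the two approaches buy: yours is more elementary, needing no representation theory, and it makes the bijection between certificates in $\mathbf{x}$-space and in $(X,Y)$-space completely transparent; moreover your converse uses only the easy direction of symmetrization, not exactness of symmetry reduction. The paper's computation, on the other hand, produces the explicit LMI description of $\Sigma_{n,d}$ as a by-product, which is what gets dualized later (Remark \ref{rem:psd}, Theorem \ref{thm:maindual}); in your approach that LMI must still be extracted afterwards from the $S$-lemma, exactly as in Remark \ref{rem:psd}, but this is routine. One small point: you read the defining certificate of $\Sigma_{n,d}$ as $\sum_{i=1}^n\lambda_i\,p_i(\mathbf{x})$ with $\lambda_i\ge 0$, whereas the paper's displayed definition literally says $\sum_{i=1}^d \lambda_i(1-p_i(\mathbf{x}))$; your reading is the intended one (it is the one consistent with the paper's own Reynolds-operator computation and with the Lasserre-hierarchy interpretation), so this is a correction of a typo rather than a discrepancy in your proof.
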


\begin{remark}\label{rem:psd}
By the $S$-Lemma \cite{MR2353804} we can express the condition of nonnegativity of $P_{Q}(X,Y)$ as a linear matrix inequality in the coefficients of $P$: $P=A_0 + \sum_{\alpha=1}^d \sqrt{n}A_{\alpha}X_{\alpha} + (n-1)\sum_{\alpha=1}^d A_{\alpha\alpha}X_{\alpha}^2 - n\sum_{\alpha=1}^d A_{\alpha\alpha}Y_{\alpha}^2$ is nonnegative for all  $X, Y$ such that $||Y||^2 + \frac{1}{n}||X||^2 \leq 1$ if and only if there exists $c\geq 0$ such that:
$$\begin{bmatrix} 
A_0 & \frac{\sqrt{n}}{2}A_1&\cdots & \frac{\sqrt{n}}{2}A_d&0&\cdots &0\\
\frac{\sqrt{n}}{2}A_1 & (n-1)A_{11}& 0& 0&0&\cdots&0\\
\vdots     & 0 & \ddots&0& 0&\cdots&0\\
\frac{\sqrt{n}}{2}A_d & 0&0 & (n-1)A_{dd}& 0&\cdots&0\\
0&0&\cdots &0&-nA_{11}&\cdots&0\\
\vdots&0&\cdots&0&0&\ddots&0\\
0&0&\cdots&0&0&0&-nA_{dd}\\ 
\end{bmatrix} +c\begin{bmatrix} 
-1 & 0&\cdots & 0&0&\cdots &0\\
0 & \frac{1}{n}& 0& 0&0&\cdots &0\\
\vdots     & 0 & \ddots&0&0&\cdots &0\\
0 & 0&0 & \frac{1}{n}&0&\cdots &0\\ 
0&0&\cdots &0&1&\cdots&0\\
\vdots&0&\cdots&0&0&\ddots&0\\
0&0&\cdots&0&0&0&1\\ 
\end{bmatrix}
$$
is positive semidefinite.
\end{remark}

\begin{proof}
We use symmetry reduction technique of \cite{MR2067190,BR}. Let $V$ be the vector space of linear polynomials in $d\cdot n$ variables $\mathbf{x}_i=(\xi_{i,1},\cdots,\xi_{i,d})$, $i=1,\cdots n$. The symmetric group $S_n$ acts on $V$ by permuting the vectors $\mathbf{x}_i$, or equivalently $S_n$ permutes $n$ groups of $\alpha$-th coordinates, $\xi_{i,\alpha}$. Recall that irreducible $S_n$-modules are indexed by partitions of $n$. For more background on representation theory of $S_n$ see \cite{MR1824028}.

It is not hard to see that $V$ decomposes as follows into irreducible $S_n$-modules:
$$V=(d+1)S^{(n)}\oplus dS^{(n-1,1)}.$$

The $S_n$-invariant part of $V$ corresponding to the partition $(n)$ is spanned by polynomials $1$ and $s_{\alpha}$ for $\alpha=1,\cdots, d$. For the partition $(n-1,1)$ we can split the isotypic component into $d$ irreducible modules by considering for fixed $\alpha=1,\cdots,d$ polynomials of the form $$\sum_{i=1}^n c_i \xi_{i,\alpha} \hspace{.5cm} \text{with} \hspace{.5 cm} \sum c_i=0.$$
We pick $d$ isomorphic representatives $f_{\alpha}=n\xi_{1,\alpha}-s_{\alpha}$, i.e. these polynomials generate the above irreducible modules, and can be mapped to each other under $S_n$-equivariant maps.

Define polynomials $p_{\alpha\beta}=\sum_{i=1}^n \xi_{i, \alpha}\xi_{i, \beta}$ and $q_{\alpha\beta}=\sum_{1\leq i<j\leq n}\xi_{i,\alpha}\xi_{j,\beta}$, as well as $p_{\alpha\alpha}=\sum_{i=1}^n \xi_{i,\alpha}^2$. We will use $\sym$ to denote the \textit{Reynolds operator} :
$$\sym f=\frac{1}{n!}\sum_{\sigma \in S_n} \sigma(f).$$
We observe that  $$\sym \left(\sum_{\alpha=1}^d \lambda_{\alpha}(1-p_{\alpha}(\mathbf{x}))\right)=\frac{\lambda_1+\cdots+\lambda_d}{n}\left(n - \sum_{\alpha=1}^d p_{\alpha\alpha}\right).$$

Now consider $(d+1)\times(d+1)$ matrix $S$ given $S_{\alpha,\beta}=\sym s_{\alpha}s_{\beta}$ with $s_0=1$, and $d\times d$ matrix $F$ given by $F_{\alpha,\beta}=\sym f_{\alpha}f_{\beta}$. The symmetry reduction procedure tells us that $p\in \Sigma_{n,d}$ if and only if there exist positive semidefinite symmetric matrices $G, H$ and $c \geq 0$ such that 
$$p=\langle S,G \rangle+\langle F,H \rangle+c\left(n-\sum_{\alpha=1}^dp_{\alpha\alpha}\right).$$
Further examining matrices $S$ and $F$, we have for $\alpha\neq \beta$  $$ s_{\alpha}s_{\beta}=p_{\alpha\beta}+q_{\alpha\beta}\hspace{.3cm}\text{and} \hspace{.3cm} \sym f_{\alpha} f_{\beta}=\frac{1}{n-1}q_{\alpha\beta}-p_{\alpha\beta},$$
while $$\sym s_{\alpha}^2=s_{\alpha\alpha}+p_{\alpha\alpha}  \hspace{.3cm}\text{and} \hspace{.3cm}\sym f^2_{\alpha}=n^2\sym \xi_{1, \alpha}^2- s_{\alpha}^2=np_{\alpha\alpha}-s_{\alpha}^2=(n-1)p_{\alpha\alpha}-s_{\alpha\alpha}.$$
We observe that in order to obtain $p$ of the given form the coefficients of $p_{\alpha\beta}$ and $q_{\alpha\beta}$ must cancel. However this can only happen if the corresponding entries in $G$ and $H$ matrices are zero. Therefore we see that $H$ is a diagonal matrix and $G$ has non-zero off-diagonal entries only in the first row and column.

Let $$G=\begin{bmatrix} 
g_0 & \frac{1}{2}g_1&\cdots & \frac{1}{2}g_d\\
\frac{1}{2}g_1 & g_{11}& 0& 0\\
\vdots     & 0 & \ddots&0\\
\frac{1}{2}g_d & 0&0 & g_{dd}\\
\end{bmatrix},$$
and $H$ be a diagonal matrix with entries $h_{\alpha\alpha}$. We now observe that $p$ has a potentially non-zero coefficient of $p_{\alpha\alpha}$ for $\alpha=1,\cdots, d$. This coefficient may be canceled only by using $n-\sum_{\alpha=1}^d p_{\alpha\alpha}$. Therefore the coefficient $c$ must satisfy $c=g_{\alpha\alpha}+(n-1)h_{\alpha\alpha}$, $\alpha=1,\cdots d$. Solving for $h_{\alpha\alpha}$ we get $h_{\alpha\alpha}=\frac{c-g_{\alpha\alpha}}{n-1}$, with the additional restriction $c\geq g_{\alpha\alpha}$, since $h_{\alpha\alpha}\geq 0$.

By examining $p=\langle S,G \rangle+\langle F,H \rangle+c(n-\sum_{\alpha=1}^dp_{\alpha\alpha}),$ we see that 
$$A_0=cn+g_0, \hspace{1cm} A_{\alpha}=g_{\alpha} \hspace{1cm} A_{\alpha\alpha}=\left (\frac{ng_{\alpha\alpha}}{n-1}-\frac{c}{n-1}\right)\hspace{1cm} \text {with} \hspace{1cm} c\geq g_{\alpha\alpha}.$$

Solving for $g$'s we get $g_0=A_0-cn$, $g_{\alpha}=A_{\alpha}$ and $ g_{\alpha\alpha}=\frac{n-1}{n}A_{\alpha\alpha}+\frac{c}{n}$. The condition $c\geq g_{\alpha\alpha}$ can be rewritten as $c\geq A_{\alpha\alpha}$.
Therefore $p$ is a sum of squares if and only there exists $c\geq 0$ and $c\geq A_{\alpha\alpha}$, $\alpha=1,\cdots,d$ such that the following matrix is positive semidefinite:

$$\begin{bmatrix} 
A_0 & \frac{1}{2}A_1&\cdots &\frac{1}{2}A_d\\
\frac{1}{2}A_1 & \frac{n-1}{n}A_{11}& 0& 0\\
\vdots     & 0 & \ddots&0\\
\frac{1}{2}A_d & 0&0 & \frac{n-1}{n}A_{dd}\\
\end{bmatrix} +c\begin{bmatrix} 
-n & 0&\cdots & 0\\
0 & \frac{1}{n}& 0& 0\\
\vdots     & 0 & \ddots&0\\
0 & 0&0 & \frac{1}{n}.\\
\end{bmatrix}$$







We incorporate condition $c\geq A_{\alpha\alpha}$ by enlarging the matrices so that the following matrix is positive semidefinite for some $c\geq 0$:

$$\begin{bmatrix} 
A_0 & \frac{1}{2}A_1&\cdots & \frac{1}{2}A_d&0&\cdots &0\\
\frac{1}{2}A_1 & \frac{n-1}{n}A_{11}& 0& 0&0&\cdots&0\\
\vdots     & 0 & \ddots&0& 0&\cdots&0\\
\frac{1}{2}A_d & 0&0 & \frac{n-1}{n}A_{dd}& 0&\cdots&0\\
0&0&\cdots &0&-A_{11}&\cdots&0\\
\vdots&0&\cdots&0&0&\ddots&0\\
0&0&\cdots&0&0&0&-A_{dd}\\ 
\end{bmatrix} +c\begin{bmatrix} 
-n & 0&\cdots & 0&0&\cdots &0\\
0 & \frac{1}{n}& 0& 0&0&\cdots &0\\
\vdots     & 0 & \ddots&0&0&\cdots &0\\
0 & 0&0 & \frac{1}{n}&0&\cdots &0\\ 
0&0&\cdots &0&1&\cdots&0\\
\vdots&0&\cdots&0&0&\ddots&0\\
0&0&\cdots&0&0&0&1\\ 
\end{bmatrix}
$$
We can multiply the above sum on both sides by a diagonal matrix with diagonal $(1,\sqrt{n},\cdots,\sqrt{n})$ to see that $p\in \Sigma_{n,d}$ if and only if there exists $c\geq 0$ such that the following matrix is positive semidefinite:
$$\begin{bmatrix} 
A_0 & \frac{\sqrt{n}}{2}A_1&\cdots & \frac{\sqrt{n}}{2}A_d&0&\cdots &0\\
\frac{\sqrt{n}}{2}A_1 & (n-1)A_{11}& 0& 0&0&\cdots&0\\
\vdots     & 0 & \ddots&0& 0&\cdots&0\\
\frac{\sqrt{n}}{2}A_d & 0&0 & (n-1)A_{dd}& 0&\cdots&0\\
0&0&\cdots &0&-nA_{11}&\cdots&0\\
\vdots&0&\cdots&0&0&\ddots&0\\
0&0&\cdots&0&0&0&-nA_{dd}\\ 
\end{bmatrix} +c\begin{bmatrix} 
-n & 0&\cdots & 0&0&\cdots &0\\
0 & 1& 0& 0&0&\cdots &0\\
\vdots     & 0 & \ddots&0&0&\cdots &0\\
0 & 0&0 & 1&0&\cdots &0\\ 
0&0&\cdots &0&n&\cdots&0\\
\vdots&0&\cdots&0&0&\ddots&0\\
0&0&\cdots&0&0&0&n\\ 
\end{bmatrix}
$$
\noindent The Theorem now follows from the Remark \ref{rem:psd} with $c$ in the Remark replaced by $nc$.
\end{proof}

In the following, we prove that the sufficient condition for membership in $P_{n,d}$ provided by $\Sigma_{n,d}$ is asymptotically necessary, following the same line of arguments as before. We define $\Sigma_{n,d}'$, by expanding $\Sigma_{n,d}$:
\begin{equation*}
\Sigma_{n,d}' = \{(A_0, A_{\alpha}, A_{\alpha\alpha}):\ \left(\frac{n}{n-1}A_0, A_{\alpha}, A_{\alpha\alpha}\right) \in \Sigma_{n,d}\}
\end{equation*} 

\noindent We show in Theorem \ref{thm:approx} that, $P_{n,d} \subseteq \Sigma_{n,d}'$ which provides a sufficient condition for membership in $P_{n,d}$. To prove this theorem, we need a technical lemma, which we prove at the end of this section:

\begin{lemma} \label{lemma:approx}Let $n\in \mathbb{N}$ and $X= (X_1, \cdots , X_d) \in  \R^d$ such that $||X||^2 = X_1^2+\cdots +X_d^2 \leq n $. There exists $\mathbf{x}=(\mathbf{x}_1, \cdots, \mathbf{x}_n) \in K_n^d$ with $\mathbf{x}_i=(\xi_{i,1}, \cdots, \xi_{i, d})$ such that:
\begin{equation*}
\begin{split}
&\sum_{i=1}^n \frac{\xi_{i,\alpha}}{\sqrt{n}} = X_{\alpha} \ \text{for}\  \alpha=1,2,\cdots,d\\
&\sum_{i=1}^n\frac{\xi_{i,\alpha}^2}{n} - \frac{X_{\alpha}^2}{n}  =0 \ \text{for}\  \alpha =1,2,\cdots, d-1\\
&\left|\left(\sum_{i=1}^n \frac{\xi_{i,d}^2}{n} - \frac{X_d^2}{n}\right)- \left(1-\frac{||X||^2}{n}\right)\right| \leq \frac{1}{n}\\
\end{split}
\end{equation*}
\end{lemma}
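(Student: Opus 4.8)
The plan is to solve the problem one coordinate group at a time, reducing everything to a one-dimensional placement problem in the last coordinate. The first observation is that conditions one and two are extremely rigid for $\alpha=1,\dots,d-1$. By the Cauchy--Schwarz inequality $\sum_i \xi_{i,\alpha}^2 \ge \bigl(\sum_i \xi_{i,\alpha}\bigr)^2/n$, and the requirements $\sum_i \xi_{i,\alpha} = \sqrt n\,X_\alpha$ and $\sum_i\xi_{i,\alpha}^2 = X_\alpha^2$ force equality in Cauchy--Schwarz. Hence I would set $\xi_{i,\alpha} = X_\alpha/\sqrt n$ for every $i$ and every $\alpha<d$; this is the only choice compatible with the first two conditions, and it satisfies them automatically.

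Next I would turn to the last coordinate. Writing $r^2 = 1 - \frac1n\sum_{\alpha<d}X_\alpha^2$, which is nonnegative since $\|X\|^2\le n$, the unit-ball constraint $\|\mathbf x_i\|\le 1$ becomes exactly $\xi_{i,d}^2\le r^2$. So the remaining task is purely one-dimensional: choose $\xi_{1,d},\dots,\xi_{n,d}\in[-r,r]$ with prescribed sum $S:=\sqrt n\,X_d$ and with $\sum_i\xi_{i,d}^2$ as close as possible to the maximal value $nr^2$. Using the identity $r^2 = 1 - \|X\|^2/n + X_d^2/n$, a short computation shows that the quantity inside the absolute value in condition three equals $\tfrac1n\bigl(\sum_i\xi_{i,d}^2 - nr^2\bigr)$, so it suffices to make $\sum_i\xi_{i,d}^2$ within $1$ of $nr^2$.

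To achieve this I would place $n-1$ of the values at $\pm r$ and use a single free value $t\in[-r,r]$ to correct the sum exactly: with $a$ values equal to $+r$ and $b=n-1-a$ values equal to $-r$, the sum is $(a-b)r + t = S$, while $\sum_i\xi_{i,d}^2 = (n-1)r^2 + t^2$, whose deficit from $nr^2$ is $r^2-t^2\le r^2\le 1$, giving condition three with the claimed error $1/n$. The feasibility of this placement is the one point requiring care: I must check that some attainable value of $a-b$ (an integer of the parity of $n-1$ lying in $[-(n-1),n-1]$) sits within distance $1$ of $S/r$, so that $t=S-(a-b)r$ stays in $[-r,r]$. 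This reduces to the bound $|S|\le nr$, which I would verify by squaring: $|S|\le nr$ is equivalent to $X_d^2\le n-\|X\|^2+X_d^2$, i.e.\ to the hypothesis $\|X\|^2\le n$. Since the attainable values of $a-b$ are spaced $2$ apart and cover $[-(n-1),n-1]$, every point of $[-n,n]$ lies within $1$ of one of them, so a valid triple $(a,b,t)$ exists.

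The main obstacle is thus not conceptual but bookkeeping: confirming that the one-dimensional correction keeps $t\in[-r,r]$ (equivalently, that $\|X\|^2\le n$ yields $|S|\le nr$), and disposing of the degenerate case $r=0$---which occurs only when $\|X\|^2=n$ and $X_d=0$, where one simply sets all $\xi_{i,d}=0$. Finally I would confirm membership in $K_n^d$: the $\pm r$ points satisfy $\sum_{\alpha<d}X_\alpha^2/n + r^2 = 1$ and hence lie exactly on the unit sphere, while the single free point has $\sum_{\alpha<d}X_\alpha^2/n + t^2 \le 1$. Together with the three conditions verified above, this shows the constructed $\mathbf x$ lies in $K_n^d$ and meets all requirements.
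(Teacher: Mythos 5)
Your proof is correct and follows essentially the same construction as the paper's: set $\xi_{i,\alpha}=X_\alpha/\sqrt{n}$ for $\alpha<d$, place $n-1$ of the last coordinates at the extreme values $\pm r$, and use one free value to fix the sum exactly, with the same feasibility check that $\|X\|^2\le n$ guarantees $|\sqrt{n}X_d|\le nr$. The only (harmless) differences are that you additionally justify via Cauchy--Schwarz why the first $d-1$ coordinates are forced, and you treat the degenerate case $r=0$ separately because you divide by $r$, which the paper's phrasing avoids.
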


\noindent We are now ready to state and prove our main result of this Section.
\begin{theorem}\label{thm:approx} $\Sigma_{n,d} \subseteq P_{n,d} \subseteq \Sigma_{n,d}'$
\end{theorem}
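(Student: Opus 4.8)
The plan is to treat the two inclusions separately, with the second carrying all the work. The inclusion $\Sigma_{n,d}\subseteq P_{n,d}$ is immediate: if $Q\in\Sigma_{n,d}$ then by Theorem \ref{thm:sos} $P_Q(X,Y)\geq 0$ on the whole region $\|Y\|^2+\frac1n\|X\|^2\leq 1$, and since every $\mathbf{x}\in K_n^d$ produces an $(X,Y)$ in this region with $Q(\mathbf{x})=P_Q(X,Y)$, we get $Q\geq 0$ on $K_n^d$. For the reverse inclusion I would first unwind the definitions: $Q=(A_0,A_\alpha,A_{\alpha\alpha})\in\Sigma_{n,d}'$ means $Q'=(\tfrac{n}{n-1}A_0,A_\alpha,A_{\alpha\alpha})\in\Sigma_{n,d}$, which by Theorem \ref{thm:sos} holds iff $P_{Q'}(X,Y)\geq 0$ on the region $D=\{\|Y\|^2+\frac1n\|X\|^2\leq 1\}$. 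Since $P_{Q'}=P_Q+\frac{A_0}{n-1}$, the whole problem reduces to the single estimate
$$\min_{(X,Y)\in D}P_Q(X,Y)\ \geq\ -\frac{A_0}{n-1}\qquad\text{for all }Q\in P_{n,d}.$$

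Before minimizing I would record two facts about $Q\in P_{n,d}$. First, $A_0=Q(0,\dots,0)\geq 0$ since the origin lies in $K_n^d$. Second, and crucially, $A_{\alpha\alpha}\leq \frac{A_0}{n-1}$ for every $\alpha$: restricting $Q$ to those configurations in which every $\mathbf{x}_i$ lies on the $\alpha$-th coordinate axis yields a polynomial of the form $A_0+A_\alpha s_\alpha+A_{\alpha\alpha}s_{\alpha\alpha}$ lying in $P_{n,1}$, and the bound then follows from Proposition \ref{prop:d1} exactly as in the proof of Proposition \ref{prop:d11}.

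Now I would carry out the minimization by first minimizing over $Y$ for fixed $X$ with $\|X\|\leq\sqrt{n}$. The $Y$-dependence of $P_Q$ is $-n\sum_\alpha A_{\alpha\alpha}Y_\alpha^2$ on the ball $\|Y\|^2\leq\rho^2:=1-\frac1n\|X\|^2$, so the minimizing $Y$ concentrates all its mass in the coordinate maximizing $A_{\alpha\alpha}$, and the inner minimum equals $g(X)=A_0+\sum_\alpha\sqrt n A_\alpha X_\alpha+(n-1)\sum_\alpha A_{\alpha\alpha}X_\alpha^2-n\rho^2\max\{0,\max_\alpha A_{\alpha\alpha}\}$. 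If $\max_\alpha A_{\alpha\alpha}\leq 0$ the minimizer is $Y=0$, which is realized \emph{exactly} by the all-equal configuration $\mathbf{x}_i=X/\sqrt n\in K_n^d$, so $g(X)=Q(\mathbf{x})\geq 0$. Otherwise relabel so that $A_{dd}=\max_\alpha A_{\alpha\alpha}>0$, whence the minimizer places all the $Y$-mass on coordinate $d$. Here I would invoke Lemma \ref{lemma:approx} (with $d$ as the distinguished coordinate) to produce a genuine point $\mathbf{x}\in K_n^d$ with $Y_\alpha=0$ for $\alpha<d$ and $|Y_d^2-\rho^2|\leq\frac1n$; since $Q(\mathbf{x})=P_Q(X,Y)\geq 0$ and $g(X)=Q(\mathbf{x})+nA_{dd}(Y_d^2-\rho^2)$, I obtain $g(X)\geq -nA_{dd}\cdot\frac1n=-A_{dd}\geq-\frac{A_0}{n-1}$ by the second recorded fact. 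In both cases $g(X)\geq-\frac{A_0}{n-1}$, which is exactly the required estimate.

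The main obstacle is the interface between the exact region $D$ and the set of $(X,Y)$ actually realized by points of $K_n^d$: these differ, and Lemma \ref{lemma:approx} closes the gap only up to an $O(1/n)$ error in $Y_d^2$. What makes the scheme work is that this error, after multiplication by the factor $nA_{dd}$ appearing in $P_Q$, contributes exactly $A_{dd}$, which is in turn bounded by the enlargement amount $\frac{A_0}{n-1}$; thus $\Sigma_{n,d}'$ is precisely the enlargement of $\Sigma_{n,d}$ needed to absorb the realizability defect. Consequently the entire argument rests on the two auxiliary ingredients, namely Lemma \ref{lemma:approx} and the bound $A_{\alpha\alpha}\leq\frac{A_0}{n-1}$, together with the observation that the minimizing $Y$ concentrates in a single coordinate so as to match the output of the lemma.
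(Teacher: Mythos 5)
Your proof is correct and follows essentially the same route as the paper's: reduce the second inclusion to the estimate $P_Q \geq -\frac{A_0}{n-1}$ on the region $\|Y\|^2+\frac{1}{n}\|X\|^2\leq 1$, split on the sign of $\max_\alpha A_{\alpha\alpha}$, use the all-equal configuration $\mathbf{x}_i = X/\sqrt{n}$ in one case and Lemma \ref{lemma:approx} (with the maximizing coordinate as the distinguished one) in the other, and close with the bound $A_{\alpha\alpha}\leq\frac{A_0}{n-1}$ from the $d=1$ analysis. Your write-up is in fact slightly more complete than the paper's, since you spell out the restriction-to-a-coordinate-axis argument behind $A_{\alpha\alpha}\leq\frac{A_0}{n-1}$, which the paper only cites as ``the $d=1$ case.''
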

\begin{proof}
The first inclusion is immediate. To show the second inclusion, let $Q = (A_0, A_{\alpha}, A_{\alpha\alpha})\in P_{n,d}$. Recall that the polynomial $P_Q(X, Y)$ was defined as
\begin{equation*}
P_Q(X, Y) = A_0 + \sum_{\alpha=1}^d \sqrt{n}A_{\alpha}X_{\alpha} + (n-1)\sum_{\alpha=1}^d A_{\alpha\alpha}X_{\alpha}^2 - n\sum_{\alpha=1}^d A_{\alpha\alpha}Y_{\alpha}^2.
\end{equation*}
In order to show that $P_Q+\frac{n}{n-1} A_0 \geq 0$ whenever $||Y||^2 +\frac{1}{n}||X||^2 \leq 1$, we pick a point $(X, Y)$ that satisfies the latter condition, and approximate it using $\mathbf{x}=(\mathbf{x}_1, \cdots, \mathbf{x}_n)$ with the help of Lemma \ref{lemma:approx}.

If $A_{\alpha\alpha} \leq 0$ for  all $1 \leq \alpha \leq d$, then 
we choose $\mathbf{x}=(\mathbf{x}_1,\cdots,\mathbf{x}_n)$ with $\mathbf{x}_i = \frac{X}{\sqrt{n}}$ for all $1\leq i \leq n$. The condition $||Y||^2 +\frac{1}{n}||X||^2 \leq 1$ ensures that $\mathbf{x}\in K_n^d$. It follows now that
\begin{equation*}
P_Q(X, Y) \geq A_0 + \sqrt{n}\sum_{\alpha=1}^d A_{\alpha}X_{\alpha} + (n-1)\sum_{\alpha=1}^d A_{\alpha\alpha}X_{\alpha}^2 = Q(\mathbf{x}) \geq 0
\end{equation*}
The last inequality follows since $Q(\mathbf{x}) \in P_{n,d}$. 

If $A_{\alpha\alpha}> 0$ for at least one $\alpha$, we assume without loss of generality that $A_{dd} \geq A_{\alpha\alpha}$ for $\alpha = 1, 2, \cdots d-1$. It also follows that $A_{dd}>0$. Clearly, 
\begin{equation}\label{thm_approx}
P_Q(X, Y) \geq A_0 + \sqrt{n}\sum_{\alpha=1}^d A_{\alpha}X_{\alpha} + (n-1)\sum_{\alpha=1}^d A_{\alpha\alpha}X_{\alpha}^2 - nA_{dd}\left(1-\frac{||X||^2}{n}\right)
\end{equation}
We now use Lemma \ref{lemma:approx} to pick $\mathbf{x'}=(\mathbf{x'}_1, \cdots, \mathbf{x'}_n)$ such that 
\begin{equation*}
\begin{split}
&X_{\alpha}'= \sum_{i=1}^n \frac{\xi_{i,\alpha}}{\sqrt{n}} = X_{\alpha} \hspace{.5cm} \text{for}\hspace{.5cm} \alpha=1,2,\cdots,d\\
&(Y_{\alpha}')^2= \sum_{i=1}^n\frac{\xi_{i,\alpha}^2}{n} - \frac{X_{\alpha}^2}{n}  =0 \hspace{.5cm} \text{for}\hspace{.5cm}  \alpha =1,2,\cdots, d-1\\
&\left|(Y_d')^2-\left(1-\frac{||X||^2}{n}\right)\right|=\left|\left(\sum_{i=1}^n \frac{\xi_{i,d}^2}{n} - \frac{X_d^2}{n}\right)- \left(1-\frac{||X||^2}{n}\right)\right| \leq \frac{1}{n}.\\
\end{split}
\end{equation*}
The above equations enable us to evaluate $Q(\mathbf{x'})$ and we thus obtain
\begin{equation*}
Q(\mathbf{x'}) = A_0 + \sqrt{n}\sum_{\alpha=1}^d A_{\alpha}X_{\alpha} + (n-1)\sum_{\alpha=1}^d A_{\alpha\alpha}X_{\alpha}^2 - nA_{dd}(Y'_d)^2 \geq 0
\end{equation*}
Finally using Equation \eqref{thm_approx}, we obtain 
\begin{equation*}
P_Q(X, Y) + A_{dd} \geq Q(\mathbf{x'})\geq 0 
\end{equation*}
It follows from the $d=1$ case that $A_{dd}\leq \frac{A_0}{n-1} $ and therefore, $P_Q+\frac{A_0}{n-1} \geq 0$ whenever $Q \in P_{n,d}$. 
\end{proof}

\begin{rem}\label{rem:psd2}
It follows from Remark \ref{rem:psd} that $(A_0,A_1,\cdots A_d,A_{11},\cdots A_{dd}) \in \Sigma_{n,d}'$ if and only if there exists $c\geq 0$ such that

$$\begin{bmatrix} 
\frac{n}{n-1}A_0 & \frac{\sqrt{n}}{2}A_1&\cdots & \frac{\sqrt{n}}{2}A_d&0&\cdots &0\\
\frac{\sqrt{n}}{2}A_1 & (n-1)A_{11}& 0& 0&0&\cdots&0\\
\vdots     & 0 & \ddots&0& 0&\cdots&0\\
\frac{\sqrt{n}}{2}A_d & 0&0 & (n-1)A_{dd}& 0&\cdots&0\\
0&0&\cdots &0&-nA_{11}&\cdots&0\\
\vdots&0&\cdots&0&0&\ddots&0\\
0&0&\cdots&0&0&0&-nA_{dd}\\ 
\end{bmatrix} +c\begin{bmatrix} 
-1 & 0&\cdots & 0&0&\cdots &0\\
0 & \frac{1}{n}& 0& 0&0&\cdots &0\\
\vdots     & 0 & \ddots&0&0&\cdots &0\\
0 & 0&0 & \frac{1}{n}&0&\cdots &0\\ 
0&0&\cdots &0&1&\cdots&0\\
\vdots&0&\cdots&0&0&\ddots&0\\
0&0&\cdots&0&0&0&1\\ 
\end{bmatrix}
$$
is positive semidefinite.
\end{rem}
We now prove Lemma \ref{lemma:approx}.
\begin{proof}[Proof of Lemma \ref{lemma:approx}]
We construct a point $\mathbf{x}\in K_n^d$ with the claimed properties. Let
\begin{equation*}
\mathbf{x}_i=
\begin{cases}
\left(\frac{X_{1}}{\sqrt{n}}, \frac{X_{2}}{\sqrt{n}},\cdots \frac{X_{d-1}}{\sqrt{n}}, \sqrt{1-\sum_{\alpha =1}^{d-1}\left(\frac{X_{\alpha}}{\sqrt{n}}\right)^2}\right) \ \text{for} \ i=1, 2,\cdots k\\

\left(\frac{X_{1}}{\sqrt{n}}, \frac{X_{2}}{\sqrt{n}},\cdots \frac{X_{d-1}}{\sqrt{n}}, -\sqrt{1-\sum_{\alpha =1}^{d-1}\left(\frac{X_{\alpha}}{\sqrt{n}}\right)^2}\right) \ \text{for} \ i=k+1, k+2,\cdots n-1\\

\left(\frac{X_{1}}{\sqrt{n}}, \frac{X_{2}}{\sqrt{n}},\cdots \frac{X_{d-1}}{\sqrt{n}}, z\right) \ \text{for} \ i=n\\
\end{cases}
\end{equation*}
We make an appropriate choice of $k $ and $z$ in the following. We first observe that 
\begin{equation*}
\sqrt{1-\sum_{\alpha =1}^{d-1}\left(\frac{X_{\alpha}}{\sqrt{n}}\right)^2} \geq \frac{|X_d|}{\sqrt{n}}
\end{equation*}
Each $\xi_{i,d}$ is equal to $\pm \sqrt{1-\sum_{\alpha =1}^{d-1}\left(\frac{X_{\alpha}}{\sqrt{n}}\right)^2}$ we may choose $k$ of them with the positive sign so that the sum of all the $\xi_{i,d}$'s is closest to $\sqrt{n}X^d$. That is,
\begin{equation*}
\left|\sqrt{n}X_{d}-\sum_{i=1}^{n-1}\xi_{i,d}\right| \leq \sqrt{1-\sum_{\alpha =1}^{d-1}\left(\frac{X_{\alpha}}{\sqrt{n}}\right)^2} \leq 1
\end{equation*} 
Let us now set $z=\left(\sqrt{n}X_{d}-\sum_{i=1}^{n-1}\xi_{i,d}\right)$. It follows from the above inequality that this is a valid choice for a point on $K_n^d$. It also follows that $\frac{\sum_i x_i^{\alpha}}{\sqrt{n}}=X_{\alpha}$ for $\alpha =1, 2, \cdots, d$. It remains to show the last inequality in the lemma. Explicitly, 
\begin{equation*}
\frac{\sum_i \xi_{i,d}^2}{n} - \left(\sum_i\frac{\xi_{i,d}}{n}\right)^2=  \left(1-\frac{1}{n}||X||^2\right)  -\frac{1}{n}\left(1-||\mathbf{x}_n||^2\right). 
\end{equation*}
\end{proof}

\section{Necessary and sufficient criteria for moments}\label{sec: criteria}
In this section, using the cones $\Sigma_{n, d}$ and $\Sigma_{n,d}'$, we develop a necessary condition and a sufficient condition for membership of a vector $(z_0, z_1,\cdots, z_d, z_{11}, z_{22},\cdots, z_{dd})$ in the moment cone $C_{n,d}$. We also show that these two conditions approach each other, i.e., the necessary condition is asymptotically sufficient and the sufficient condition is asymptotically necessary. 

Let us also define $(\Sigma^{'}_{n,d})^*$ as the dual of $\Sigma_{n,d}'$. It follows from Theorem \ref{thm:approx} that 
\begin{equation*}
\Sigma^*_{n,d} \supseteq C_{n,d} \supseteq (\Sigma_{n,d}^{'})^*
\end{equation*}
Thus, membership in $\Sigma_{n,d}^*$ is a necessary condition and membership in $(\Sigma_{n,d}^{'})^*$ is a sufficient condition for membership in $C_{n,d}$. In the following we develop inequality criteria to check for membership in these two cones, expressed as Linear Matrix Inequalities (LMI). 

\subsection{Dualizing Sum of Squares Cones}

Recall from Remark \ref{rem:psd} that the cone $\Sigma_{n,d}$ can be characterized as $(A_0,A_{\alpha},A_{\alpha\alpha})\in \Sigma_{n,d}$ if and only if there exists $c\in \RR$ such the following matrix is positive semidefinite
$$M=\begin{bmatrix} 
A_0-c & \frac{\sqrt{n}}{2}A_1&\cdots & \frac{\sqrt{n}}{2}A_d&0&\cdots &0&0\\
\frac{\sqrt{n}}{2}A_1& (n-1)A_{11}+\frac{c}{n} & 0& 0&0&\cdots&0&0\\
\vdots     & 0 & \ddots&0& 0&\cdots&0&0\\
\frac{\sqrt{n}}{2}A_d & 0&0 & (n-1)A_{dd}+\frac{c}{n} & 0&\cdots&0&0\\
0&0&\cdots &0&-nA_{11}+c&\cdots&0&0\\
\vdots&0&\cdots&0&0&\ddots&0&0\\
0&0&\cdots&0&0&0&-nA_{dd}+c&0\\ 
0&0&\cdots&0&0&0&0&c\\
\end{bmatrix} 
$$

Here the condition $c\geq 0$ has been absorbed into the last row and column. We observe that we can write
$$M=cD+A_0M_0+\sum_{\alpha=1}^dA_{\alpha}M_{\alpha}+\sum_{\alpha=1}^d
A_{\alpha\alpha}M_{\alpha\alpha}, \hspace{.2cm}M\succeq 0,$$
where matrices $D,M_{\alpha}$ and $M_{\alpha\alpha}$ are the coefficient matrices, whose entries depend on $n$. Standard semidefinite programming duality \cite[Chapter 1]{BPT} tells us that the dual cone of $\Sigma_{n,d}$ is given by:
$\Sigma_{n,d}^*$ is the set of all points $(z_0,z_1,\cdots,z_d,z_{11},\cdots,z_{dd})$ such that there exists a positive semidefinite $(2d+2)\times(2d+2)$ matrix $X$ such that
$$\langle D,X\rangle=0, \hspace{.4cm} \langle M_0, X\rangle=z_0,  \hspace{.4cm} \langle M_{\alpha}, X\rangle=z_{\alpha},  \hspace{.4cm} \langle M_{\alpha\alpha}, X\rangle=z_{\alpha\alpha}.$$
Here $\langle A,B\rangle$ is the standard trace inner product $\langle A,B\rangle=\operatorname{trace} AB$. By similar considerations and Remark \ref{rem:psd2} the dual cone of $\Sigma_{n,d}'$ is the set of all points $(z_0,z_1,\cdots,z_d,z_{11},\cdots,z_{dd})$ such that there exists a positive semidefinite $(2d+2)\times(2d+2)$ matrix $X$ satisfying
$$\langle D,X\rangle=0, \hspace{.4cm} \frac{n}{n-1}\langle M_0, X\rangle=z_0,  \hspace{.4cm} \langle M_{\alpha}, X\rangle=z_{\alpha},  \hspace{.4cm} \langle M_{\alpha\alpha}, X\rangle=z_{\alpha\alpha}.$$

\noindent Therefore we obtain the following characterization of $C_{n,d}$:

\begin{theorem}\label{thm:maindual}
A vector $(z_0, z_1,\cdots, z_d, z_{11}, z_{22},\cdots, z_{dd})$ lies in $C_{n,d}$ if there exists a positive semidefinite matrix $X$ and such that:

$$\langle D,X\rangle=0, \hspace{.4cm}\frac{n}{n-1} \langle M_0, X\rangle=z_0,  \hspace{.4cm} \langle M_{\alpha}, X\rangle=z_{\alpha},  \hspace{.4cm} \langle M_{\alpha\alpha}, X\rangle=z_{\alpha\alpha}.$$
Moreover, if there doesn't exist a  positive semidefinite matrix $X$ such that:

$$\langle D,X\rangle=0, \hspace{.4cm} \langle M_0, X\rangle=z_0,  \hspace{.4cm} \langle M_{\alpha}, X\rangle=z_{\alpha},  \hspace{.4cm} \langle M_{\alpha\alpha}, X\rangle=z_{\alpha\alpha},$$
then $(z_0, z_1,\cdots, z_d, z_{11}, z_{22},\cdots, z_{dd}) \notin C_{n,d}$.
\end{theorem}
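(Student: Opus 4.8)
The plan is to obtain both halves of the theorem by dualizing the sandwich $\Sigma_{n,d}\subseteq P_{n,d}\subseteq\Sigma'_{n,d}$ of Theorem \ref{thm:approx} and then substituting the explicit dual-cone descriptions assembled just above. Since $K_n^d$ is compact, $C_{n,d}$ is a closed convex cone whose dual is $P_{n,d}$, so biduality gives $P^*_{n,d}=C_{n,d}$; applying the star operation to Theorem \ref{thm:approx}, which reverses inclusions for closed convex cones, yields
$$(\Sigma'_{n,d})^*\subseteq C_{n,d}\subseteq\Sigma^*_{n,d}.$$
It then remains to recognize that the first LMI in the statement (the one carrying the factor $\tfrac{n}{n-1}$ on $z_0$) is exactly the membership condition for $(\Sigma'_{n,d})^*$, and that the second LMI is exactly the membership condition for $\Sigma^*_{n,d}$, both read off from the coefficient matrices $D,M_0,M_\alpha,M_{\alpha\alpha}$.

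For the sufficient direction I would need only weak duality. Given $z$ feasible for the scaled LMI, with witness $X\succeq0$ satisfying $\langle D,X\rangle=0$, $\tfrac{n}{n-1}\langle M_0,X\rangle=z_0$, $\langle M_\alpha,X\rangle=z_\alpha$ and $\langle M_{\alpha\alpha},X\rangle=z_{\alpha\alpha}$, I pair $X$ against the positive semidefinite matrix of Remark \ref{rem:psd2} certifying an arbitrary $Q=(A_0,A_\alpha,A_{\alpha\alpha})\in\Sigma'_{n,d}$. The free-variable term drops out because $\langle D,X\rangle=0$, and the factor $\tfrac{n}{n-1}$ on $z_0$ precisely cancels the factor $\tfrac{n}{n-1}$ on $A_0$, so the pairing collapses to $A_0z_0+\sum_\alpha A_\alpha z_\alpha+\sum_\alpha A_{\alpha\alpha}z_{\alpha\alpha}\ge0$. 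Hence $z\in(\Sigma'_{n,d})^*\subseteq C_{n,d}$, with no appeal to strong duality.

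The necessary direction is the contrapositive of $C_{n,d}\subseteq\Sigma^*_{n,d}$, and here lies the one real subtlety, which I expect to be the main obstacle. The set $F$ of vectors $z$ admitting a feasible $X$ is the image of the positive semidefinite cone under a linear map, and such images are in general not closed; weak duality only gives $F\subseteq\Sigma^*_{n,d}$, whereas the argument requires the reverse inclusion $\Sigma^*_{n,d}\subseteq F$, so that $z\in C_{n,d}\subseteq\Sigma^*_{n,d}$ forces the existence of a feasible $X$. I would close this gap by verifying Slater's condition for the spectrahedral representation of $\Sigma_{n,d}$: the cone has nonempty interior, since taking $A_0>0$, all $A_\alpha=0$ and all $A_{\alpha\alpha}=0$ makes the matrix $M$ defining $\Sigma_{n,d}$ diagonal with strictly positive entries for any $0<c<A_0$, hence positive definite. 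Strict feasibility yields strong duality, attainment of the dual optimum, and therefore the equality $\Sigma^*_{n,d}=F$. With that equality, $z\notin F$ implies $z\notin\Sigma^*_{n,d}\supseteq C_{n,d}$, which is the stated necessary condition. Finally, all the matrices involved have size $2d+2$, so both linear matrix inequalities have the claimed size.
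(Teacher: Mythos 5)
Your proposal is correct and takes essentially the same route as the paper: the paper also obtains Theorem \ref{thm:maindual} by dualizing the sandwich $\Sigma_{n,d}\subseteq P_{n,d}\subseteq\Sigma'_{n,d}$ of Theorem \ref{thm:approx} (using that $C_{n,d}$ is closed, so $C_{n,d}=P_{n,d}^*$) and then reading off the dual cones of $\Sigma_{n,d}$ and $\Sigma'_{n,d}$ from the coefficient matrices $D,M_0,M_\alpha,M_{\alpha\alpha}$ via semidefinite programming duality. The only difference is that where the paper simply cites ``standard semidefinite programming duality,'' you explicitly flag and close the closedness/attainment gap by verifying Slater's condition for the spectrahedral representation of $\Sigma_{n,d}$, so that every point of $\Sigma_{n,d}^*$ has an exact positive semidefinite witness $X$; this makes your write-up, if anything, more complete than the paper's.
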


\noindent Analyzing this formulation further and applying Schur complement \cite{MR2160825}, the above conditions can be expressed as a set of $2^d$ algebraic inequalities giving necessary and sufficient conditions for membership in $C_{n,d}$, which asymptotically converge.

\begin{theorem}\label{thm:maindual2}
A non-zero vector $(z_0, z_1,\cdots, z_d, z_{11}, z_{22},\cdots, z_{dd})$ lies in $C_{n,d}$ if $z_0>0$ and
\begin{equation*}
    \sum_{\alpha=1}^d\max \left\{\frac{z_0z_{\alpha\alpha}}{n}, \frac{z_{\alpha}^2}{n}\right\} \leq  z_0^2\left(\frac{n-1}{n}\right)^2+ \sum_{\alpha=1}^d \frac{(n-1)z_0z_{\alpha\alpha}}{n^2}
\end{equation*}
Moreover, if the inequalities below are violated:
\begin{equation*}
  z_0>0  \hspace{.3cm} \text{and} \hspace{.3cm}  \sum_{\alpha=1}^d\max \left\{\frac{z_0z_{\alpha\alpha}}{n-1}, \frac{z_{\alpha}^2}{n}\right\} \leq  z_0^2+ \sum_{\alpha=1}^d \frac{z_0z_{\alpha\alpha}}{n}
\end{equation*}
Then a non-zero vector $(z_0, z_1,\cdots, z_d, z_{11}, z_{22},\cdots, z_{dd}) \notin C_{n,d}$.
\end{theorem}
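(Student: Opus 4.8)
The plan is to start from the semidefinite descriptions of $\Sigma_{n,d}^*$ and $(\Sigma'_{n,d})^*$ furnished by Theorem \ref{thm:maindual} and eliminate the auxiliary matrix variable $X$ explicitly. Recall that by Theorem \ref{thm:approx} and conic duality we have $\Sigma_{n,d}^*\supseteq C_{n,d}\supseteq(\Sigma'_{n,d})^*$, so membership in $(\Sigma'_{n,d})^*$ is \emph{sufficient} and membership in $\Sigma_{n,d}^*$ is \emph{necessary} for membership in $C_{n,d}$. Since a nonzero moment vector in $C_{n,d}$ comes from a nonzero measure and hence has $z_0=\mu(K_n^d)>0$, it suffices to analyze the two feasibility problems under the assumption $z_0>0$, producing the sufficient inequality (the first display) from $(\Sigma'_{n,d})^*$ and the necessary inequality (the second display) from $\Sigma_{n,d}^*$.

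The key observation is that the four coefficient matrices $D,M_0,M_\alpha,M_{\alpha\alpha}$ share one sparsity pattern: their only nonzero off-diagonal entries sit in the first row and column, at positions $(0,\alpha)$ for $1\le\alpha\le d$. Thus all of the constraints involve only the diagonal of $X$ together with the entries $X_{0\alpha}$. The support graph is the star $K_{1,d}$ on $\{0,1,\dots,d\}$ together with $d+1$ isolated vertices, which is chordal; hence a prescription of these entries extends to a globally positive semidefinite $X$ if and only if each maximal-clique principal submatrix is positive semidefinite. This reduces the existence of $X$ to the \emph{decoupled} conditions that each block $\begin{bmatrix} X_{00} & X_{0\alpha} \\ X_{0\alpha} & X_{\alpha\alpha}\end{bmatrix}$ be positive semidefinite and that the remaining diagonal entries be nonnegative, with no coupling across distinct $\alpha$ surviving.

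Writing $a=X_{00}$, $X_{0\alpha}=z_\alpha/\sqrt n$, $u_\alpha=X_{\alpha\alpha}$, $v_\alpha=X_{d+\alpha,d+\alpha}$ and $w=X_{2d+1,2d+1}$, the linear constraints read $z_{\alpha\alpha}=(n-1)u_\alpha-nv_\alpha$ and, from $\langle D,X\rangle=0$, $w=a-\tfrac1n\sum_\alpha u_\alpha-\sum_\alpha v_\alpha$. Imposing $v_\alpha\ge0$ and $w\ge0$ and eliminating $v_\alpha,w$ converts these into the lower bound $u_\alpha\ge z_{\alpha\alpha}/(n-1)$ and the budget constraint $\sum_\alpha u_\alpha\le a+\tfrac1n\sum_\alpha z_{\alpha\alpha}$, while the $2\times2$ condition gives $u_\alpha\ge z_\alpha^2/(na)$ (using $a>0$). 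Since the smallest admissible value of each $u_\alpha$ is the pointwise maximum of its lower bounds and the budget couples the $u_\alpha$ only through their sum, a feasible $X$ exists if and only if
\begin{equation*}
\sum_{\alpha=1}^d\max\left\{\frac{z_{\alpha\alpha}}{n-1},\,\frac{z_\alpha^2}{na}\right\}\le a+\frac1n\sum_{\alpha=1}^d z_{\alpha\alpha}.
\end{equation*}
Setting $a=z_0$ (the constraint defining $\Sigma_{n,d}^*$) and multiplying by $z_0>0$ yields the necessary inequality, and setting $a=\tfrac{n-1}{n}z_0$ (the constraint defining $(\Sigma'_{n,d})^*$) and multiplying by the positive quantity $\tfrac{(n-1)z_0}{n}$ yields the sufficient inequality.

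Finally I would record that a single inequality of the form $\sum_\alpha\max\{p_\alpha,q_\alpha\}\le R$ is equivalent to the $2^d$ polynomial inequalities obtained by selecting, for each $\alpha$, one of the two arguments of the maximum, giving the promised semialgebraic description. The one genuinely delicate point is the completion step: had one instead forced the off-diagonal entries $X_{\alpha\beta}$ with $\alpha,\beta\ge1$ to vanish, the resulting arrow matrix would impose the \emph{coupled} Schur condition $a\ge\sum_\alpha X_{0\alpha}^2/u_\alpha$, which is strictly stronger and does not reproduce the clean sum-of-maxima form; recognizing that these entries are free and may be filled in by chordal completion is exactly what decouples the coordinates. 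Some care is also needed in the degenerate cases $a=0$ or $u_\alpha=0$, which force the corresponding $z_\alpha$ to vanish and are covered by the convention that the term $z_\alpha^2/(na)$ is then read as zero.
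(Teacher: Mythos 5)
Your proof is correct, and at the top level it runs parallel to the paper's: both arguments start from the dual SDP descriptions in Theorem \ref{thm:maindual} together with the sandwich $\Sigma_{n,d}^*\supseteq C_{n,d}\supseteq(\Sigma'_{n,d})^*$ coming from Theorem \ref{thm:approx}, eliminate the matrix variable $X$, arrive at the scalar feasibility criterion $\sum_{\alpha}\max\left\{\frac{z_{\alpha\alpha}}{n-1},\frac{z_\alpha^2}{na}\right\}\le a+\frac{1}{n}\sum_{\alpha}z_{\alpha\alpha}$ with $a=X_{00}$, and then substitute $a=z_0$ (necessary direction) and $a=\frac{n-1}{n}z_0$ (sufficient direction). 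Where you genuinely differ is the mechanism for the decoupling step. The paper keeps the inner $d\times d$ block $\tilde{X}$ dense, asserts (without proof --- it amounts to a pinching argument) that all remaining off-diagonal entries may be taken to be zero, takes the Schur complement with respect to $z_0$ to get $z_0\tilde{X}-\frac{1}{n}zz^T\succeq 0$, and then explicitly chooses $x_{ij}=\frac{z_iz_j}{nz_0}$ so that this Schur complement becomes diagonal, reading the decoupled bounds $x_{\alpha\alpha}\ge \frac{z_\alpha^2}{nz_0}$ off its diagonal. You instead invoke the chordal PSD completion theorem (Grone--Johnson--S\'a--Wolkowicz) for the star pattern $K_{1,d}$ plus isolated vertices, which delivers the decoupled $2\times 2$ clique conditions in one stroke; note that the paper's explicit choice $x_{ij}=\frac{z_iz_j}{nz_0}$ is precisely the completion whose existence that theorem guarantees. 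Your route is less elementary, since it cites a nontrivial completion theorem, but it is logically cleaner: it makes rigorous the paper's ``can be chosen to be zero'' assertion, and it treats both directions of the equivalence (feasibility if and only if the inequalities hold) symmetrically, whereas the paper's construction handles them by separate observations. Your closing remark --- that forcing $X_{\alpha\beta}=0$ for $\alpha,\beta\ge 1$ would instead impose the strictly stronger coupled condition $a\ge\sum_\alpha X_{0\alpha}^2/u_\alpha$ --- pinpoints exactly the freedom that both arguments must exploit, and which the paper uses silently.
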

\begin{proof}
We may write a $(2d+2)\times (2d+2)$ matrix $X$ that satisfies $\langle M_0, X\rangle=z_0,  \hspace{.1cm} \langle M_{\alpha}, X\rangle=z_{\alpha}$ and $ \langle M_{\alpha\alpha}, X\rangle=z_{\alpha\alpha}$ as
$$X = \begin{bmatrix} 
z_0 & \frac{1}{\sqrt{n}}z_1&\cdots &\frac{1}{\sqrt{n}}z_d & 0 &\cdots &0 & 0 \\
\frac{1}{\sqrt{n}}z_1 & x_{11}& \cdots& x_{1d} & 0 &\cdots &0 & 0\\
\vdots    & \vdots & \ddots&\vdots & \vdots &\vdots &\vdots & \vdots\\
\frac{1}{\sqrt{n}}z_d & x_{1d}&\cdots & x_{dd} & 0 &0 &0 & 0\\
 0 &0 &\cdots & 0 & \frac{n-1}{n}x_{11}- \frac{1}{n}z_{11} &0 &0 & 0 \\
  \vdots &\vdots &\cdots & \vdots & \vdots &\ddots &\vdots & \vdots \\
   0 &0 &\cdots & 0 & 0 &\cdots &\frac{n-1}{n}x_{dd}- \frac{1}{n}z_{dd} & 0 \\
    0 &0 &\cdots & 0 & 0 &\cdots &0 & x_0\\
\end{bmatrix}$$
Note that all the off-diagonal entries that can be chosen to be zero without disturbing the condition $X\succeq 0$ have been chosen to be zero. The final equation, $\langle X, D\rangle =0$ eliminates the variable $x_0$ i.e.,  $x_{0} - z_0 +\sum x_{\alpha\alpha} - \frac{1}{n}\sum z_{\alpha\alpha} =0 $.  The numbers $z_0, z_{\alpha}, z_{\alpha\alpha}$ are moments only if the numbers $x_{ij}$ can be chosen that the matrix $X$ is positive semidefinite. We list down the constraints on $x_{ij}$. Positive semidefiniteness of the lower $(d+1)\times (d+1)$ block yields the following inequalities, the last of which is an upper bound on $\sum x_{\alpha\alpha}$ and the rest are lower bounds on $x_{\alpha\alpha}$:
\begin{equation*}
    \begin{split}
        x_{\alpha\alpha} &\geq \frac{1}{n-1}z_{\alpha\alpha}, \ \text{for}\ \alpha = 1, \cdots, d\\
        \sum x_{\alpha\alpha} &\leq z_0 + \frac{1}{n}\sum z_{\alpha\alpha}.\\
    \end{split}
\end{equation*}
The above inequalities imply that $$\frac{1}{n(n-1)}\sum z_{\alpha\alpha}\leq z_0, \hspace{.3cm} \text{and} \hspace{.3cm}z_0 + \frac{1}{n}\sum z_{\alpha\alpha}\geq 0.$$ Additionally we know that $x_{\alpha\alpha}\geq 0$ for $1\leq \alpha\leq d$. If $z_0=0$, then it is clear that $z_\alpha=0$ for $\alpha=1,\cdots,d$, and the above inequalities taken together imply that if $z_0=0$, then $z_{\alpha\alpha}=0$ for $\alpha=1,\cdots,d$. Therefore we may assume that $z_0>0$.

Taking Schur complement with respect to $z_0$ (\cite{MR2160825}) of the top $(d+1)\times (d+1)$ block, we see that $X\succeq 0$ if and only if
: $$z_0\tilde{X}-\frac{1}{n}zz^T \succeq 0.$$ Here, $\tilde{X}$ is the $d\times d$ block containing the variables $x_{ij}$ for $i, j = 1, \cdots, d$ and $z$ is the vector $(z_1, \cdots, z_d)$. We may now choose the off-diagonals of $\tilde{X}$, $x_{ij}$, for $i \neq j$ so that the off-diagonals of $z_0\tilde{X}-\frac{1}{n}zz^T$ are all zero. That leaves us with the inequalities that come from the non-negativity of the diagonal elements, providing $d$ more upper bounds on $x_{\alpha\alpha}$:
\begin{equation*}
    x_{\alpha\alpha} z_0\geq \frac{z_{\alpha}^2}{n} , \ \text{for}\ \alpha = 1, \cdots, d
\end{equation*}
We thus have two lower bounds each on $x_{11}, \cdots , x_{dd}$,  which are the only remaining variables. We also have one upper bound on $\sum x_{\alpha\alpha}$. We can now eliminate the variables $x_{\alpha\alpha}$ by setting the upper bound higher than every lower bound on $\sum x_{\alpha\alpha}$. There are $2^d$ different lower bounds on $\sum x_{\alpha\alpha}$, coming from two lower bounds each on $x_{\alpha\alpha}$. Thus, after eliminating these variables, we are left with the following conditions, that together form a necessary condition for membership in $C_{n, d}$:
\begin{equation*}\label{necessary}
   z_0>0 \hspace{.3cm} \text{and} \hspace{.3cm}  \sum_{\alpha=1}^d\max \left\{\frac{z_0z_{\alpha\alpha}}{n-1}, \frac{z_{\alpha}^2}{n}\right\} \leq  z_0^2+ \sum_{\alpha=1}^d \frac{z_0z_{\alpha\alpha}}{n}
\end{equation*}
For sufficient conditions we replace $z_0$ by $\frac{n-1}{n}z_0$ in the above inequality:
\begin{equation*}\label{sufficient}
   z_0>0 \hspace{.3cm} \text{and} \hspace{.3cm}  \sum_{\alpha=1}^d\max \left\{\frac{z_0z_{\alpha\alpha}}{n}, \frac{z_{\alpha}^2}{n}\right\} \leq  \frac{(n-1)^2}{n^2}z_0^2+ \sum_{\alpha=1}^d \frac{(n-1)z_0z_{\alpha\alpha}}{n^2}.
\end{equation*}
\end{proof}
 
\begin{rem}\label{rem:limit}
The above inequalities can be naturally written in rescaled moment coordinates $\tilde{z}_{\alpha}=\frac{z_{\alpha}}{\sqrt{n}}$ and $\tilde{z}_{\alpha\alpha}=\frac{z_{\alpha\alpha}}{n}$ as:
$$ z_0>0 \hspace{.3cm} \text{and} \hspace{.3cm}  \sum_{\alpha=1}^d\max \left\{ \frac{n}{n-1}z_0 \tilde{z}_{\alpha\alpha}, \tilde{z}_{\alpha}^2\right\} \leq  z_0^2+ \sum_{\alpha=1}^d z_0\tilde{z}_{\alpha\alpha},$$
for necessary conditions and:
$$ z_0>0 \hspace{.3cm} \text{and} \hspace{.3cm}  \sum_{\alpha=1}^d\max \left\{ z_0 \tilde{z}_{\alpha\alpha}, \tilde{z}_{\alpha}^2\right\} \leq  \frac{(n-1)^2}{n^2}z_0^2+ \frac{n-1}{n}\sum_{\alpha=1}^d z_0\tilde{z}_{\alpha\alpha},$$
for sufficient conditions.

We observe that that as $n$ approaches infinity the necessary and sufficient conditions approach each other, and in the (large particle) limit the cone $C_{n,d}$ is given by inequalities: 
$$ z_0>0 \hspace{.3cm} \text{and} \hspace{.3cm}  \sum_{\alpha=1}^d\max \left\{ z_0 \tilde{z}_{\alpha\alpha}, \tilde{z}_{\alpha}^2\right\} \leq  z_0^2+ \sum_{\alpha=1}^d z_0\tilde{z}_{\alpha\alpha}.$$
\end{rem}

\begin{example}
Examining the above limit for $d=1$ we see that the limit of the cones $C_{n,1}$ is characterized is given by inequalities:
$$z_0>0,\hspace{.3cm} \tilde{z}_1^2\leq z_0(z_0+\tilde{z}_{11}).$$

Thus the cone $C_{n,1}$ is dual (up to closure) to the characterization of the limit of the cones $P_{n,1}$ given in Equation \eqref{eqn:1cone} of Section \ref{sec:ad1}, since the above inequalities characterize positive semidefiniteness of the matrix
$$\begin{bmatrix}z_0+\tilde{z}_{11} &\tilde{z}_1\\ \tilde{z}_1& z_0 \end{bmatrix},$$
up to closure.
\end{example}
The above results generalize conditions on moments found in \cite{PhysRevLett.99.250405}. In the next section, we show how the above equations reduce to the one found in \cite{PhysRevLett.99.250405} under the appropriate constraints.

\section{A Physical Example}\label{sec:last}
As an example, we consider the case $d=3$ and use the conditions of Theorem  \ref{thm:maindual2} to define a semialgebraic set of moments contained in the set of moments which have a representing measure. Let $z_0, z_1, z_2, z_3, z_{11}, z_{22}, z_{33}$ be a set of numbers. From Theorem \ref{thm:maindual2} the necessary conditions for membership in $C_{n, 3}$ are given by $z_0>0$ and the following eight inequalities:
\begin{equation}\label{criteria}
    \begin{split}
        \frac{z_1^2}{n}+\frac{z_2^2}{n}+\frac{z_3^2}{n} &\leq z_0^2 + \frac{z_{11}z_0}{n}+\frac{z_{22}z_0}{n}+\frac{z_{33}z_0}{n}\\
        \frac{z_1^2}{n}+\frac{z_2^2}{n}+\frac{z_{33}z_0}{n-1} &\leq z_0^2 + \frac{z_{11}z_0}{n}+\frac{z_{22}z_0}{n}+\frac{z_{33}z_0}{n}\\
        \frac{z_1^2}{n}+\frac{z_{22}z_0}{n-1}+\frac{z_3^2}{n} &\leq z_0^2 + \frac{z_{11}z_0}{n}+\frac{z_{22}z_0}{n}+\frac{z_{33}z_0}{n}\\
        \frac{z_1^2}{n}+\frac{z_{22}z_0}{n-1}+\frac{z_{33}z_0}{n-1} &\leq z_0^2 + \frac{z_{11}z_0}{n}+\frac{z_{22}z_0}{n}+\frac{z_{33}z_0}{n}\\
        \frac{z_{11}z_0}{n-1}+\frac{z_2^2}{n}+\frac{z_3^2}{n} &\leq z_0^2 + \frac{z_{11}z_0}{n}+\frac{z_{22}z_0}{n}+\frac{z_{33}z_0}{n}\\
        \frac{z_{11}z_0}{n-1}+\frac{z_2^2}{n}+\frac{z_{33}z_0}{n-1} &\leq z_0^2 + \frac{z_{11}z_0}{n}+\frac{z_{22}z_0}{n}+\frac{z_{33}z_0}{n}\\
        \frac{z_{11}z_0}{n-1}+\frac{z_{22}z_0}{n-1}+\frac{z_3^2}{n} &\leq z_0^2 + \frac{z_{11}z_0}{n}+\frac{z_{22}z_0}{n}+\frac{z_{33}z_0}{n}\\
        \frac{z_{11}z_0}{n-1}+\frac{z_{22}z_0}{n-1}+\frac{z_{33}z_0}{n-1} &\leq z_0^2 + \frac{z_{11}z_0}{n}+\frac{z_{22}z_0}{n}+\frac{z_{33}z_0}{n}\\
    \end{split}
\end{equation}
In the language of spins, as used in \cite{PhysRevLett.107.180502}, the moments can be expressed in terms of the spin expectation values as
\begin{equation*}
    \begin{split}
        z_1 &= 2\langle J_x \rangle\\
        z_2 &= 2\langle J_y \rangle\\
        z_3 &= 2\langle J_z \rangle\\
        z_{11} &= 4\langle J_x^2 \rangle - n\\
        z_{22} &= 4\langle J_y^2 \rangle-n\\
        z_{33} &= 4\langle J_z^2 \rangle-n\\
    \end{split}
\end{equation*}
The first of the conditions \eqref{criteria} reads $2\Delta J_x^2 + 2\Delta J_y^2 + 2\Delta J_z^2 \geq n$ where $\Delta J_x^2 = \langle J_x^2 \rangle - \langle J_x\rangle^2$. The last one reads  $4\langle J_x^2\rangle + 4\langle J_y^2\rangle +4\langle J_z^2\rangle \leq n(n+2)$. These two conditions are trivial, i.e., satisfied by all quantum states, including entangled states. The remaining six inequalities can be viewed as two sets of permutatively related related inequalities:
\begin{equation*}
    \begin{split}
        4(n-1)(\Delta J_y^2 + \Delta J_z^2 )&\geq n(n-2) + 4 \langle J_x^2 \rangle, \\
        4(n-1)(\Delta J_x^2 + \Delta J_z^2 )&\geq n(n-2) + 4 \langle J_y^2 \rangle, \\
        4(n-1)(\Delta J_y^2 + \Delta J_x^2 )&\geq n(n-2) + 4 \langle J_z^2 \rangle, \\
    \end{split}
\end{equation*}
and
\begin{equation*}
    \begin{split}
        4(n-1)\Delta J_x^2 &\geq  4( \langle J_y^2 \rangle + \langle J_z^2 \rangle ) - 2n \\
        4(n-1)\Delta J_y^2 &\geq  4( \langle J_x^2 \rangle + \langle J_z^2 \rangle ) - 2n \\
        4(n-1)\Delta J_z^2 &\geq  4( \langle J_y^2 \rangle + \langle J_x^2 \rangle ) - 2n \\
    \end{split}
\end{equation*}
This is a set of necessary conditions for non-entanglement, which appeared in \cite{PhysRevLett.99.250405}. 
Violation of at least one of the above inequalities is sufficient for entanglement. The necessary conditions for entanglement are obtained by the violation of the sufficient conditions for membership in the moment cone, which are given by replacing $z_0$ by $\frac{n-1}{n}z_0$. The resulting conditions can also be arranged with regards to permutative relations:
\begin{equation*}
    \begin{split}
        4(n-1)(\Delta J_y^2 + \Delta J_z^2 )&\geq n^2-n-1 + 4 (\langle J_x^2 \rangle+\langle J_y \rangle^2+\langle J_z \rangle^2), \\
        4(n-1)(\Delta J_x^2 + \Delta J_z^2 )&\geq n^2-n-1 + 4 (\langle J_y^2 \rangle+\langle J_x \rangle^2+\langle J_z \rangle^2), \\
        4(n-1)(\Delta J_y^2 + \Delta J_x^2 )&\geq n^2-n-1 + 4 (\langle J_z^2 \rangle+\langle J_y \rangle^2+\langle J_x \rangle^2), \\
    \end{split}
\end{equation*}
And
\begin{equation*}
    \begin{split}
        4(n-1)\Delta J_x^2 &\geq  4( \langle J_y^2 \rangle + \langle J_z^2 \rangle +\langle J_x \rangle^2 ) - (n+1), \\
        4(n-1)\Delta J_y^2 &\geq  4( \langle J_x^2 \rangle + \langle J_z^2 \rangle +\langle J_y \rangle^2 ) - (n+1), \\
        4(n-1)\Delta J_z^2 &\geq  4( \langle J_y^2 \rangle + \langle J_x^2 \rangle  +\langle J_z \rangle^2) - (n+1). \\
    \end{split}
\end{equation*}

In conclusion, the relation between entanglement and real algebraic geometry and the method presented in this paper are quite general and can be used to develop entanglement criteria for other settings of many-body systems \cite{1367-2630-19-1-013027}. For instance one can consider other physically relevant symmetries of the measure such as a system with $2n$ atoms with an $S_n \times S_n$ symmetry, also known as mode-split entanglement \cite{PhysRevLett.112.150501}.

\subsection*{Acknowledgements}

Grigoriy Blekherman was partially supported by NSF grant DMS-1352073. Bharath Hebbe Madhusudhana was supported by NSF grant NSF PHYS-1506294. We thank Michael Chapman and Brian Kennedy for providing insights from a physical perspective. We also thank Matthew Boguslawski, Maryrose Barrios and Lin Xin for useful discussions.

\bibliography{References}
\bibliographystyle{unsrt}

\end{document}